\newif\ifarxiv\arxivtrue
\renewcommand{\algorithmcfname}{ALGORITHM}
\newtheorem{proposition}{Proposition}
\newcommand{\Zp}{\mathbb{Z}^+}
\newcommand{\lev}{\mathcal{L}}
\newcommand{\actvec}{\overline{\alpha}}
\newcommand{\Cinc}{\mathtt{Cost}^{\mathrm{inf}}}
\newcommand{\Cinchat}{\widehat{\mathtt{Cost}}^{\mathrm{inf}}}
\newcommand{\Cdec}{\mathtt{Cost}^{\mathrm{imp}}}
\newcommand{\Ii}{I^0}
\newcommand{\Infec}{\mathtt{Infect}}
\newcommand{\Poi}{\mathrm{\textsc{Poisson}}}
\newcommand{\scg}{\mathtt{SC}}
\newcommand{\I}{\mathtt{I}}
\newcommand{\Co}{\mathtt{Cost}}
\newcommand{\Cohat}{\widehat{\mathtt{Cost}}}
\newcommand{\Cimp}{\mathtt{Cost}^{\mathrm{imp}}}
\newcommand{\Cnc}{\mathtt{Cost}^{\mathrm{NC}}}
\renewcommand{\S}{\mathcal{S}}
\newcommand{\C}{\mathcal{C}}
\newcommand{\CCS}{\texttt{CCS}\xspace}
\newcommand{\CU}{\texttt{CU}\xspace}
\newcommand{\EQtwo}{\texttt{EQ2L}\xspace}
\newcommand{\EQthree}{\texttt{EQ3L}\xspace}
\newcommand{\II}{\texttt{Init}\xspace\texttt{Inf}\xspace}
\newtheorem{remark}{Remark}
\newcommand{\sj}[1]{\ifdraft{\color{blue}[Shahin: {#1}]}\fi}
\newcommand{\mc}[1]{\ifdraft{\color{magenta}[Mithun: {#1}]}\fi}
\newcommand{\fj}[1]{\ifdraft{\color{red}[Feiran: {#1}]}\fi}
\title{A Game-Theoretic Approach for Hierarchical Epidemic Control}
\author{
Feiran Jia, Aditya Mate, Zun Li, Shahin Jabbari, Mithun Chakraborty\\ Milind Tambe, Michael Wellman, Yevgeniy Vorobeychik
}
\begin{document}

\maketitle 
\thispagestyle{plain}
\pagestyle{plain}

\begin{abstract}
We design and analyze a multi-level game-theoretic model of hierarchical policy interventions for epidemic control, such as those in response to the COVID-19 pandemic. 
Our model captures the potentially mismatched priorities among a hierarchy of policy-makers (e.g., federal, state, and local governments) with respect to two cost components that have opposite dependence on the policy strength --- post-intervention infection rates and the socio-economic cost of policy implementation. 
Additionally, our model 
includes a crucial third factor in decisions: a cost of non-compliance with the policy-maker immediately above in the hierarchy, such as non-compliance of counties with state-level policies.
We propose two novel algorithms for approximating solutions to such games.
The first is based on best response dynamics (BRD), and exploits the tree structure of the game.
The second combines quadratic integer programming (QIP), which enables us to collapse the two lowest levels of the game, with best response dynamics.
Through extensive experiments, we show that our QIP-based approach significantly outperforms the BRD algorithm both in running time and the quality of equilibrium solutions.
Finally, we apply the QIP-based algorithm to experiments based on both synthetic and real-world data under various parameter configurations, and analyze the resulting (approximate) equilibria to gain insight into the impact of decentralization on overall welfare (measured as the negative sum of costs) as well as emergent properties like free-riding and fairness in cost distribution among policy-makers.
\end{abstract}

\section{Introduction}\label{sec:intro}
Democratic governments and institutions typically have a hierarchical structure. For example, policies in the U.S., Canada, and many European democracies emerge from complex interactions among the federal and state governments, as well as county boards, city councils and mayors. Such interactions are characterized by inherent asymmetries across different levels of the hierarchy. On the one hand, the specifics of policy formulation and enforcement (e.g., training and deployment of personnel and updating of infrastructure) are generally in the hands of administrative bodies at lower levels of the hierarchy --- often the \textit{lowest} level --- for practical reasons; actions these entities take are the ones that truly matter in the sense that they directly impact costs and benefits realized at all levels. On the other hand, entities at higher levels may have the power to impose constraints in some form or another on the policy-makers within their immediate jurisdiction (e.g., the U.S. federal government can constrain state policies);
violations of these constraints, in turn, entail a non-compliance cost to the violator, such as legal costs, penalties, or reputation loss.
Examples of such hierarchical policy structure arise in the spheres of education (e.g., topics to be included in primary education), healthcare (e.g., vaccination) and immigration. 
A preeminent recent example of such hierarchical policy-making is the response to the ongoing COVID-19 pandemic in countries with decentralized administration.
Policies concerning social distancing, masking and vaccination have involved recommendations at the federal level, guidelines and restrictions at the state/province/district level, and measures adopted by specific counties, cities or even individual businesses and schools.

In general, policies are contentious.
Agents at all levels of the policy-making hierarchy may disagree about the best policies, or more fundamentally, about the particular trade-offs made in devising policies e.g., lockdowns may have considerable economic and socio-psychological costs, but reduce the number of infections, and states and counties may disagree about how we should trade off these opposing considerations.

We propose a novel but stylized game-theoretic model that aims to capture the strategic tension in hierarchical policy-making for epidemic control. Our model is a game among policy-makers at 3 levels: Government at level 1, States at level 2 and Counties at level 3. In this game, policies at the higher level have an impact by imposing non-compliance costs on the level directly below, but ultimate implementation of policies happens by the Counties at the lowest level. Each agent trades off two types of costs: policy \emph{implementation cost} such as socio-psychological or economical impacts of lockdowns and \emph{infection cost} i.e., the number of new infections. Besides affecting the structure of utilities, the hierarchy also impacts the sequence of moves: agents at higher levels precede lower levels (e.g., by announcing guidelines), with the latter observing and reacting to the policy recommendations by the level directly above them. We stylize the actions to be single bounded scalars representing the degree of social constraints imposed in response to the epidemic (e.g., the strictness or extent of enforcement of mask or vaccination mandates).

Our first contribution is an analytic 
approximation of a recently proposed agent-based model (ABM) for COVID-19 pandemic spread estimation that accounts for social distancing~\citep{wilder2020modeling};
we show that our analytic model closely mirrors the short-term behavior of the ABM with a much smaller computational cost compared to the ABM.
Next, we propose two algorithmic approaches for approximating equilibrium solutions to our game.
Our first approach is a variation of best response dynamic that leverages the tree structure of the proposed game model; we term this approach \emph{BRD}. Best response dynamics is a common heuristic used for solving \emph{simultaneous-move games}~\citep{Fudenberg98}; our approach, in contrast, is designed to approximate subgame-perfect Nash equilibria of games with sequential structure.
Our second approach leverages our analytic estimation of the infection spread (and hence the utility functions) to collapse the lowest two layers into a quadratic integer program. It does this by combining a second-order Taylor approximation of utilities with KKT conditions for an equilibrium at the lowest level in the hierarchy.
We call the second approach Quadratic integer programming (\emph{QIP} for short). Through extensive experiments we show that \emph{QIP} significantly outperforms \emph{BRD} both in terms of running time and quality of equilibrium approximation.

Finally, we use our modeling  framework and proposed solution techniques to experimentally investigate possible phenomena arising from decentralized policy-making.  First, we investigate the impact of forms of (de)centralization that vary both qualitatively and quantitatively on the \textit{overall cost} of the impacted population, measured as a weighted sum of the infections and implementation costs. We show that decentralization is generally detrimental to social cost except for a small, non-obvious regime in which it can lower the social cost relative to a uniform centralized policy. Our next question of interest relates to \textit{policy free-riding}: Is it possible that (in equilibrium) a agent lower in the hierarchy adopts a weak policy with a low implementation cost while imposing a negative externality on another agent (perhaps on the same level) and hence also enjoying lower infection numbers owing to the latter agent's stronger policy? Can a higher-level policy-maker mitigate such free-riding via non-compliance penalties? We show that the answer depends in a complex manner on different parameters such as initial infection rates, degree of contact among different parts of the population, weights on different types of costs, and the non-compliance cost structure. Our final set of experiments measures the \textit{fairness} in the distribution of costs as a function of model parameters as well as degrees of centralization. Our experimental results are reported on synthetic data as well as  real-world data based on recent transportation and census data for two U.S. States, New York and New Jersey.

\subsection{Related work}\label{sec:relwork}
Our work is related to the line of research applying the social and behavioral sciences to the cost-benefit analysis of both centralized and decentralized decision-making under pandemic/epidemic conditions \citep{van2020using}. Some papers approach these trade-offs from an \textit{optimal control} perspective \citep{sethi1978optimal,rowthorn2012optimal,fenichel2013economic,rowthorn2020cost}. Others study the equilibria of various game-theoretic models of \textit{individuals} deciding whether to follow guidelines for preventive measures (distancing, vaccination, etc.) and treatment, possibly against the (perceived) aggregate behavior of the population, under various models of disease propagation; e.g. the differential game model \citep{reluga2010game}, the ``wait and see" model of vaccinating behavior \citep{BHATTACHARYYA20115519}, evolutionary game-theoretic models \citep{kabir2020evolutionary,brune2020evolutionary}, and various others \citep{chen2011public,chen2012mathematical,gersovitz2010disinhibition,toxvaerd2019rational} (see, e.g. \citep{toxvaerd2020equilibrium} for a summary). Unlike these works, we model the strategic interactions among \textit{ideologically diverse, hierarchical policymakers} with explicit \textit{non-compliance penalties}, and experimentally assess the impact of such interactions upon the actually implemented policies under various parameter settings. 

Also related is the literature on ABM for pandemic spread and response policies that account for preferences/incentives of individuals \citep{fenichel2011adaptive,wilder2020modeling,hoertel2020stochastic}. Wilder et al.~\citep{wilder2020modeling} is of particular importance since our policy impact cost is computed by a closed-form approximation to their model. Other recent work includes the assessment of the impact of prevention and containment policies on the spread COVID-19 via causal analysis \citep{mastakouri2020causal}, Gaussian processes \citep{qian2020and}, and state-of-the-art data-driven non-pharmaceutical intervention models \citep{sharma2020robust}.

Instead of an analytic treatment, we empirically compute the (approximate) equilibrium of our complex, multi-level, continuous-action game using algorithmic approaches that exploit the structure of the problem. Thus, our methods belong to the category of \textit{empirical game-theoretic analysis} (see, e.g. \citep{wellman2006methods,vorobeychik2007constrained,vorobeychik2008stochastic,gatti2011equilibrium}).
Finally, Li et al.~\citep{LiJMJCTV21} have recently introduced a general model for hierarchical games. However, their utility structure is different and our goal is to compute global equilibria as opposed to local ones. 

\section{Hierarchical Epidemic Control}\label{sec:model}
\subsection{The Game Model}\label{sec:game_model}
\begin{figure}[ht!]
	\centering
	\includegraphics[scale=0.3]{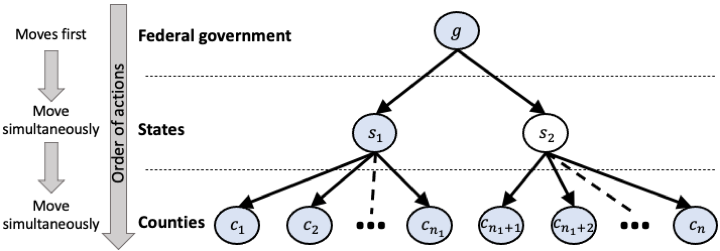}
	\caption{The hierarchy in an HECG. Arrow determines the unique parent of each node. %that can impose a non-compliance cost on the former. 
	The cost of each node depends on a subset of nodes e.g. the cost of  State $s_1$ can have functional dependency on the shaded nodes only. }
	\label{fig:model}
\end{figure}
We represent the \textit{players} in the \emph{hierarchical epidemic control game (HECG)} by \textit{nodes} in a directed rooted tree, as in Figure~\ref{fig:model}. 
An \emph{HECG} has a $3$-level structure:  Government $g$ in the first level, States $\S = \{s_1, \ldots, s_m
\}$ in the second, and Counties $\C = \{c_1, \ldots, c_n\}$ in the third.  The Counties are partitioned among States and, hence, each County has a unique State as its parent.
Each player $a \in g \cup \S \cup \C$ can take a scalar \textit{action} $\alpha_a \in [0,1]$ which abstracts the policy adopted by $a$, capturing the \textit{the extent of overall activity} implemented/recommended by $a$. Conversely, $1-\alpha_a$ represents the extent of social distancing. Thus, small $\alpha_a$ corresponds to the greatest reduction in infection spread (due to stricter social distancing). On the other hand, large $\alpha_a$ entails a higher policy implementation cost, such as socio-economic and psychological costs of social distancing. At the extremes, $\alpha_a=1$ signifies no intervention, while $\alpha_a=0$ corresponds to a complete lockdown. 

\emph{HECG} is a sequential game in which players make strategic decisions following the sequence of layers.
Specifically, the Government in level 1 moves (i.e., chooses a strategy) first, followed by all the States in level 2, who first observe the action of the Government and simultaneously choose a joint strategy profile in response. This is then followed by all the Counties in level 3 who observe the strategy of the States and simultaneously respond with joint strategy. See Figure~\ref{fig:model}.

All the utilities for COVID-19 social distancing policies are negative (i.e., costs), which we define next. Let $\actvec$ denote the profile of actions of all players. The cost function of each player $a$ has three components. \emph{(i) Infection cost}, $\Cinc_a(\actvec)$, which is a measure of infection spread (number of people infected in the player's geographic area, say). \emph{(ii) Implementation cost}, $\Cdec_a(\actvec)$, that models the psychological and economic costs of a lockdown. \emph{(iii)} For each player except the Government, \emph{non-compliance cost}, $\Cnc_a(\alpha_a,\alpha_{\pi(a)})$ where $\pi(a)$ denotes the parent of player $a$, which is a penalty imposed by a policy-maker upon an agent within its jurisdiction for deviating from its recommendation (e.g., a fine, litigation costs, or reputational harm). An important piece of structural feature to the policy implementation and impact costs is that they directly depend only on the actions of the Counties since the actions of the Government and the States are merely recommendations whereas the actual social distancing policies would be implemented based on the action of the Counties. 

To formalize, we introduce for each player $a$ the notion of its \textit{share} $\mu_a \in [0,1]$ which can be interpreted as the proportion of the total population of the country that is under the jurisdiction of the corresponding player (e.g., the share of a State is the proportion of the total population of the Counties in that state). Thus, $\mu_g = 1$. For a State $a \in \S$, we have $\mu_a=\Sigma_{a' \in \chi(a)} \mu_{a'}$ where $\chi(a)$ denote the set of children of $a$ i.e., the Counties under the jurisdiction of $a$.
The shares of the nodes in the Counties are arbitrary, except for the constraint $\Sigma_{a \in \C}\mu_a=1$.
We now use the notion of shares to formally define the infection and implementation costs of policies.
\begin{itemize}
	\item For the Counties $a \in \C$, the infection cost $\Cinc_a(\actvec)$ depends only on $\actvec_\C$, the joint action of all Counties. Furthermore, this cost lies in $[0,1]$, and is non-decreasing in each $\alpha_a \in \actvec_\C$; we provide further specifics of this function in Section~\ref{sec:infec}. For a higher-level player such as the Government or a State, this cost is the share-weighted aggregate of those of its child-nodes: $$\Cinc_a(\actvec) = \frac{1}{\mu_a}\sum_{a' \in \chi(a)} \mu_{a'}\Cinc_{a'}(\actvec)$$ for $a\in g\cup \S$.
	\item For the counties $a \in \C$, the implementation cost is simply $\Cdec_a(\actvec)=1-\alpha_a$; so the cost decreases as $\alpha$ increases and is bounded in $[0,1]$. For a higher-level player such as the Government or a State, this cost is the share-weighted aggregate of those of its child-nodes:
	$$\Cdec_a(\actvec) = \frac{1}{\mu_a}\sum_{a' \in \chi(a)} \mu_{a'}\Cdec_{a'}(\actvec).$$
	\item For the non-compliance cost, any deviation from the recommended policy of a player's parent is penalized~\citep{nytimes}, with the discrepancy being measured by the Euclidean distance i.e., for all players $a\in \S\cup \C$, $$\Cnc_a(\alpha_a,\alpha_{\pi(a)})= \left(\alpha-\alpha_{\pi(a)}\right)^2.$$ The Government incurs no non-compliance cost.
\end{itemize}

Finally, each player $a \in \S \cup \C$ has an idiosyncratic set of  \textit{weights} $\kappa_a \geq 0$ and $\eta_a\geq 0$ that trade its three cost components against each other via a convex combination and account for differences in ideology. 
Thus, the overall cost of such a player $a$ is given by
\begin{align*}
	\Co_a(\actvec) := \kappa_a \Cinc_a(\actvec) + \eta_a \Cdec_a(\actvec) + \gamma_a \Cnc_a(\alpha_a,\alpha_{\pi(a)}),
\end{align*}
where $\gamma_a = 1 -\kappa_a-\eta_a$. The Government has only one weight $\kappa_{g} > 0$, its overall cost being
\begin{align*}
\Co_{g}(\actvec) := \kappa_{g} \Cinc_{g}(\actvec) + (1-\kappa_{g})\Cdec_{g}(\actvec).
\end{align*}

\subsection{Solution Concept}\label{sec:eqbm_select}
The solution concept we are primarily interested in is a \textit{pure-strategy subgame perfect Nash equilibrium} (PSPNE) \citep{shoham2008multiagent} of our continuous-action game which is sequential-move between levels and simultaneous-move within a level. In general, a PSPNE may not exist.
Consequently, we will seek to compute an $\epsilon$-PSPNE, where $\epsilon$ is the highest benefit from deviation by any player $a$.
In Section~\ref{sec:methods}, we present a general approach for finding such approximate equilibria in our setting.

\begin{remark}\label{compli_setting}
\normalfont
As a special case of our model, we consider a setting where all Counties comply with the policy recommended by their respective States i.e., $\lambda_{c}=1$ for all $c\in\C$. We call this the \emph{compliant} setting. In this setting, each County has a dominant strategy and this reduces the game to a two-level Stackelberg game with one leader (Government) and multiple followers (States). A PSPNE in this case corresponds to a Stackelberg equilibrium.
\end{remark}

\subsection{Infection Dynamics and Cost}\label{sec:infec}
We now come to the computation of $\Cinc_a(\cdot)$ for each County. Recently, Wilder et al.~\citep{wilder2020modeling} developed and analyzed an agent-based model (ABM) for COVID-19 spread 
that accounts for the \textit{degree of contact} (both within and between households) among individuals from different parts of a population.\footnote{The model in~Wilder et al.~\citep{wilder2020modeling} is an individual-level variant of the well-known susceptible-exposed-infectious-recovered or SEIR model but we assume that every exposed person eventually becomes infected after an incubation period.} 
However, this ABM is computationally expensive, making its use for equilibrium computation impractical at scale. In this section, we will derive a closed-form model of infection spread that (as we show below) relatively closely mirrors the expected number of infections of the ABM over a short horizon. 

Let $N_a$ and $\Ii_a$ denote the fixed population of County $a$ and the number of infections in $a$ before policy intervention respectively. An individual who is not currently infected but can develop an infection on contact with someone infected is \textit{susceptible}. We call an individual from County $a'$ \textit{active} in County $a$ if that individual is capable of making contact (through travel etc.) with a susceptible individual in County $a$; if $a'=a$, we say that the individual is active within County $a$. A major parameter of the ABM is the \textit{transport matrix} $R = \{r_{aa'}\}_{a,a' \in \C}$, where $r_{aa'} \ge 0$ is the proportion of the population of County $a'$ that is active in County $a$ in the absence of an intervention. Thus, in the absence of policy intervention, the total number of individuals from County $a'$ active in County $a$ is $N_{a'}r_{aa'}$ and the total number of infected individuals from County $a'$ active in County $a$ is $\Ii_{a'}r_{aa'}$.

The policy $\alpha_{a}$ affects the population in two ways: it scales down both the susceptible and active sub-populations. In other words, under policy intervention, County $a$ has $(N_a - \Ii_a)\alpha_a$ susceptible individuals, and there are $N_{a'}\alpha_{a'}r_{aa'}$ active individuals in County $a$ from County $a'$, out of whom $\Ii_{a'}\alpha_{a'}r_{aa'}$ are (initially) infected. Hence, the proportion of infected active individuals in County $a$ is
$$\rho_a(\actvec_\C) := \frac{\sum_{a' \in \C}\Ii_{a'} \alpha_{a'} r_{aa'}}{\sum_{a' \in \C}N_{a'} \alpha_{a'} r_{aa'}}.$$

We will now focus on an arbitrary susceptible individual in County $a$ and lay down our assumptions on the process why which she may contract an infection: This individual makes actual contact with a random sample of $X$ active individuals drawn from a Poisson distribution with mean $C$, which is a parameter in our model \citep{wilder2020modeling}; this distribution is fixed across all individuals in all Counties, and all these contacts are mutually independent. The next assumption is that, in this sample of $X$ contacts for a susceptible individual in County $a$, the proportion of infected individuals is $\rho_a(\actvec_{\C})$.

Let $p \in (0,1)$ denote the probability that a susceptible individual becomes infected upon contact with an infected individual, i.e. the probability that contact with an infected individual does not infect a susceptible individual is $(1-p)$. Since all $X\rho_a(\actvec_\C)$ infected contacts of an arbitrary susceptible individual are mutually independent, the probability that the susceptible individual develops an infection is $1-(1-p)^{X\rho_a(\actvec_\C)}$. We also interpret this as the proportion of the $(N_a - \Ii_a)\alpha_a$ susceptible individuals in County $a$ who end up getting infected. Let $\Infec_a(\actvec_\C)$ denote the \textit{expected} number of %(with respect to the distribution of $X$) 
\textit{additional, post-intervention infections} in County $a$. Thus,
\begin{align*}
\Infec_a(\actvec_\C) &= \mathbb{E}_X [(N_a - \Ii_a)\alpha_a(1-(1-p)^{X\rho_a(\actvec_\C)})]\\&= (N_a - \Ii_a)\alpha_a(1-\mathbb{E}_X [((1-p)^{\rho_a(\actvec_\C)})^X]).
\end{align*}
Define $y_a(\actvec_\C) := (1-p)^{\rho_a(\actvec_\C)}$. Since $X \sim \Poi(C)$, Proposition~\ref{prop:exppoisson} in Appendix~\ref{app:closedform} tells us that 
\begin{align}
&\Infec_a(\actvec_C) = (N_a - \Ii_a)\alpha_{a}(1-e^{-C(1-y_a(\actvec_\C))}). \label{closedform}
\end{align}
Finally, the infection cost is $\Cinc_a(\actvec):=\Infec_a(\actvec_C)/N_a$.

\begin{figure}[ht!]
\centering
	\begin{subfigure}[t]{0.49\columnwidth}
	\centering
		\includegraphics[width=.99\columnwidth]{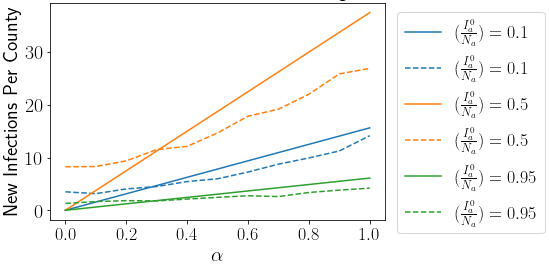}
		\caption{New infections vs. policy (shared by Counties).}
		\label{fig:closedalpha}	
	\end{subfigure}~
	\begin{subfigure}[t]{0.49\columnwidth}
	\centering
		\includegraphics[width=.99\columnwidth]{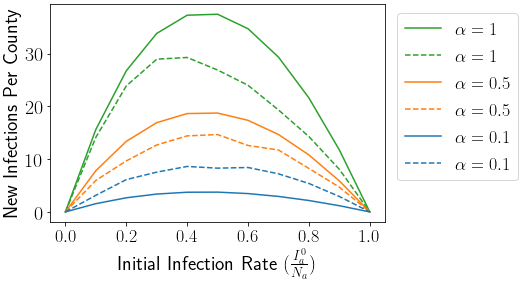}
		\caption{New infections vs. initial infection (shared by Counties).}
		\label{fig:closedinitFrac}
	\end{subfigure}
	\caption{Comparison of ABM output  (dashed lines) with closed-form approximation (solid lines). \label{fig:ABM_vs_closedform}}
\end{figure}
We ran some preliminary experiments comparing Equation~\eqref{closedform} with the actual output of the ABM~\citep{wilder2020modeling}; partial results are shown in Figure~\ref{fig:ABM_vs_closedform}. 
Note that Equation~\eqref{closedform} is a one-shot formula, whereas the ABM computes contacts and infections recursively over several time-periods with an initial \textit{incubation period} so that the effect of the first-period contacts are manifested only after a delay. Hence, we contrast the ABM output after $8$ periods (to account for the average incubation period of 7 days~\citep{lauer2020incubation}) with the above closed-form estimation. 
In the experiments we report, we have $2$ States under the Government, each State having $2$ Counties; each County $a$ has a population of $N_a=250$; the transport matrix is symmetric, given by $r_{aa'} = 0.25$ for every pair of Counties $a,a'$. 
We set $p = 0.047$ \citep{wilder2020modeling} and $C=15$ (calculated based on Prem et al.~\citep{prem2017projecting}).\footnote{In this symmetric setting, the number of new infections grows linearly with $\alpha$.}
For each set of experiments (represented by a separate color in Figure~\ref{fig:ABM_vs_closedform}), each County has the same \textit{initial infection rate} $\Ii_a/N_a$ and applies the same policy $\alpha$.
In Figure~\ref{fig:closedalpha}, we vary $\alpha$ on the x-axis, for different (fixed) values of $\Ii_a/N_a$; similarly, in Figure~\ref{fig:closedinitFrac}, for different policies, we vary the initial infection rates. The plots indicate qualitative similarity between the ABM and our approximation; a salient point of similarity is that the additional number of infections decreases as the initial infection rate gets higher or lower than a middling point, everything else remaining the same. This is because a higher infection rate implies less ``room for growth'' due to a fixed population, whereas a lower value of the same rate causes fewer further infections over the same horizon.

\section{Solution Approach}\label{sec:methods}
We now propose two methods to compute approximate PSNE for HECG. Our first approach, in Section~\ref{sec:brd}, is based on backward induction and best response dynamics which are common approaches for sequential games. Our second approach, in Section~\ref{sec:qip-methods}, speeds up the equilibrium computation compared to our first approach with only minor degradation on solution quality.
\subsection{Best Response Dynamics (BRD)}
\label{sec:brd}
HECG is a sequential game model with one-dimensional action space for each agent resulting in a non-convex strategic landscape.
To seek for PSPNE, we propose a backward induction algorithm incorporating a payoff point query interface and a best response computation component solving for a  joint profile in equilibrium.
The algorithm exploits the hierarchical structure by propagating strategic information between consecutive levels:
given the action of the Government, the States constitute a simultaneous-move game whose payoffs emerge from the strategic interactions of the Counties.
To obtain payoffs for the Government and States, we recursively call to the next level till we reach the County players.
Then we use these payoffs to solve for an approximate PSNE.
Since every such simultaneous-move game lacks the tractable analytic payoff structure for gradient-based optimization, in our current implementation we discretize the strategy space and adopt \emph{best response dynamics} (BRD) for equilibrium computation.

\ifarxiv
\begin{algorithm}[ht!]
\caption{BRD at level $l \in \{1, 2, 3\}
$}
\label{alg:ne}
\textbf{Input}: $\actvec_1, \ldots, \actvec_{l-1}$ (actions of players in levels $1$ to $l-1$), 
$T=\{T_1, T_2, T_3\}$ (rounds of best response before termination for each level), $e=\{e_1, e_2, e_3\}$ (approximation error tolerance in each level).\\
\begin{algorithmic}[1] %[1] enables line numbers
\STATE Initialize $t\leftarrow 1$, $\epsilon_l \leftarrow \infty$ and $\actvec_l$ randomly.
\WHILE{$t \leq T_l$ or $\epsilon_l \leq e_l$}
\FOR{Any player $a$ in level $l$}
\IF{$l = 3 $ i.e., $l$ is the lowest level}

\STATE $\alpha_{a} \leftarrow \arg \min_{\alpha'_{a}}\Co_{a}(\actvec)$.
\ELSE
\STATE $\alpha_{a} \leftarrow \arg \min_{\alpha'_{a}}\Co_{a}( \text{BRD}(\actvec_1, \ldots, \actvec_{l}, T, e)$.
\ENDIF
\ENDFOR
\STATE Calculate the $\epsilon_l$ for profile $\actvec_l$. 
\STATE Update $\epsilon_l$ if lower than the current value. 
\STATE $t \leftarrow t+1$.
\ENDWHILE
\STATE \textbf{return} $\actvec^*$ where $\actvec^*_l$ has the lowest $\epsilon_l$.
\end{algorithmic}
\end{algorithm}
\else
\begin{algorithm}[ht!]
\caption{BRD at level $l \in \{1, 2, 3\}
$}
\label{alg:ne}
\textbf{Input}: $\actvec_1, \ldots, \actvec_{l-1}$ (actions of players in levels $1$ to $l-1$), 
$T=\{T_1, T_2, T_3\}$ (rounds of best response before termination for each level), $e=\{e_1, e_2, e_3\}$ (approximation error tolerance in each level).\\
\begin{algorithmic}[1] %[1] enables line numbers
\State Initialize $t\leftarrow 1$, $\epsilon_l \leftarrow \infty$ and $\actvec_l$ randomly.
\While{$t \leq T_l$ or $\epsilon_l \leq e_l$}
\For{Any player $a$ in level $l$}
\If{$l = 3 $ i.e., $l$ is the lowest level}

\State $\alpha_{a} \leftarrow \arg \min_{\alpha'_{a}}\Co_{a}(\actvec)$.
\ELSE
\State $\alpha_{a} \leftarrow \arg \min_{\alpha'_{a}}\Co_{a}( \text{BRD}(\actvec_1, \ldots, \actvec_{l}, T, e)$.
\EndIf
\EndFor
\State Calculate the $\epsilon_l$ for profile $\actvec_l$. 
\State Update $\epsilon_l$ if lower than the current value. 
\State $t \leftarrow t+1$.
\EndWhile
\State \textbf{return} $\actvec^*$ where $\actvec^*_l$ has the lowest $\epsilon_l$.
\end{algorithmic}
\end{algorithm}
\fi

Algorithm~\ref{alg:ne} computes the $\epsilon$-equilibrium among players at a single level given tunable parameters. For $l \in \{1, 2, 3\}$, let $T_l$ denote the pre-determined cap on the number of steps of BRD at each level $l$; define $\epsilon_l$ as the largest improvement in payoff computed for any player at level $l$ due to a deviation from its current strategy to its best response to the current profile (see the details of best response computation in the next paragraph), and let $e_l$ the fixed upper limit on the $\epsilon_l$ for all rounds. 
To prevent BRD from getting trapped in a cycle, we maintain a history of the strategy profiles generated by BRD and check whether a new profile already exists in the memory; if yes (i.e. a cycle is detected), we resume by generating a random profile.
Finally, the algorithm returns the profile with the lowest $\epsilon_l$ upon termination. 

\begin{figure}[ht!]
  \centering
  \includegraphics[scale=0.5]{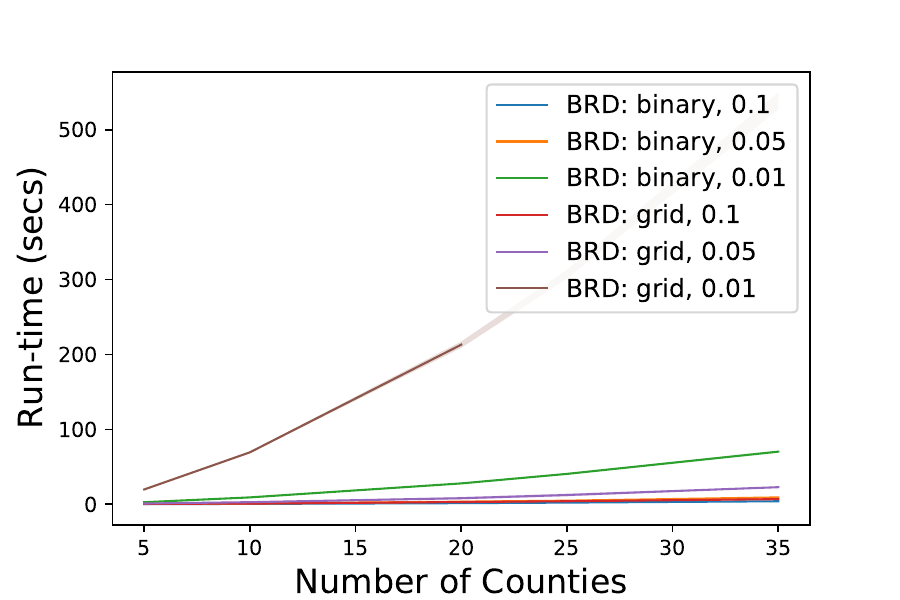}
  \caption{Run-time (y-axis) of binary grid search versus the number of States (x-axis). The number ($0.1,0.05,0.01$) next to the name of the method (binary/grid) indicates the grid size.}
  \label{fig:run_time_grid_binary}
\end{figure}

To search for the best-response strategy (an optimization problem), we discretized the continuous strategy space of each agent, i.e., the interval $[0,1]$, and used grid search with tie-breaking (smaller policy impact) to recover the optimum value. 
We also investigated empirically whether we could use \textit{binary search} (i.e. the bisection method) for best response computation at the lowest level of the game to speed up BRD with low loss in accuracy, although there is no guarantee since the overall cost of a lowest-level player is not convex (but could still be ``almost'' convex).
Figure~\ref{fig:run_time_grid_binary} shows run-times of Algorithm~\ref{alg:ne} in the compliant, symmetric setting (i.e., a two-level game described in Section~\ref{sec:eqbm_select}, with equal populations and a symmetric transport matrix as in the experiments of Section~\ref{sec:infec}) as the function of the number of States ranging from $10$ to $100$ when we replace the grid search in the second level with binary search. In all these comparison experiments, grid search finds PSPNE with $\epsilon_2 = 0$; binary search yields the same results (i.e., finds an exact equilibrium), but is more efficient.

\subsection{Quadratic Integer Programming (QIP)}\label{sec:qip-methods}
The non-trivial dependence of the overall cost of an agent on the actions of all Counties, due to the infection cost in Equation~\eqref{closedform}, prevents us from using standard optimization techniques for best response computation. A natural question is whether we can use a high-quality approximation to the cost function which will enable to us to invoke analytical methods to speed up the search for an (approximate) equilibrium even further. 

With this in mind, we derive a second-order Taylor approximation of Equation~\eqref{closedform} for each State --- this makes the overall cost quadratic since the implementation and non-compliance costs are linear and quadratic, respectively.
Given this observation, we develop an approach for computing the best response of each State (fixing the action of the Government and other States) under this (approximate) quadratic cost function using a quadratic integer program (QIP-Best-Response) by analyzing the KKT conditions and applying linearization at the Counties, as follows.

Recall that $\actvec_\C=(\alpha_{c_1}, \ldots, \alpha_{c_n})$ denotes the action profile of Counties. In this section, we will slightly abuse notation and treat $\actvec_\C$ as a column vector. For any player $a\in \C$, the infection cost $\Cinc_a(\actvec_\C)$ can be written as stated in Equation~\eqref{closedform}; for any County $a$, $\Cinc_a(\actvec_\C)$ has a non-trivial non-convex dependence on $\alpha_a$.

Let $\Cinchat_a(\actvec_\C, \actvec_{{\C}_0})$ denote the Taylor expansion of $\Cinc_a(\actvec_\C)$ around point $\actvec_{{\C}_0}$ up to the second-order term, i.e.,
\begin{align*}
    \Cinchat_a(\actvec_\C, \actvec_{{\C}_0})=\Cinc_a(\actvec_{{\C}_0})
    &+ \nabla \Cinc_a(\actvec_{{\C}_0})^\top (\actvec_{\C} - \actvec_{{\C}_0})\\
    &+ \frac{1}{2} (\actvec-\actvec_0)^\top
    \nabla^2 \Cinc_a(\actvec_{{\C}_0})(\actvec-\actvec_{\C_0}).
\end{align*}
Then, the total cost of County $a$ can be approximated by replacing $\Cinc_a(\actvec_\C)$ in $\Co_a(\actvec)$ with $\Cinchat_a(\actvec_\C, \actvec_{{\C}_0})$. Let us call this quantity $\Cohat_a(\actvec)$. The best response of County $a$ can be written as 
$$\alpha^*_a = \text{argmin}_{\alpha_a}\Cohat_a(\actvec), \quad \text{s.t. }\alpha_a\in [0,1].$$
%Since this is a quadratic program and we are solvhttps://www.overleaf.com/project/5eaa26ef0348470001606f1eing for a minimum with respect to the scalar $\alpha_{a}$, we start by checking for convexity. 
We denote $G_a = \nabla \Cinc_a(\actvec_{{\C}_0})$, and $H_a = \nabla^2 \Cinc_a(\actvec_{{\C}_0})$. Here, for each agent $a$, $G_a$ (resp. $H_a$) is a vector (resp. matrix) indexed by all agents (resp. all pairs of agents). Let $G_a[a']$ (resp. $H_a[a',a'']$) denote the element corresponding to the agent $a'$ (resp. the ordered agent-pair $(a',a'')$), and $H_a[a',\cdot]$ the row corresponding to agent $a'$. 
\iffalse
Then, we have
\begin{align}
    \nabla\Cinchat_a(\actvec_\C, \actvec_{{\C}_0})
    % &= \kappa_{a}\nabla_{\alpha_{a}} i_{\C}(\actvec_0)
    % - \eta_{a}
    % + 2(1-\kappa_{a}-\eta_{a})(\alpha_{a}-\alpha_s)\\
    &= \kappa_{a} \big[ G_{a}[{a}] + H_{a}[{a}, \cdot](\actvec_{\C}-\actvec_{{\C}_0}) \big]
    \nonumber
    \\&- \eta_{a} + 2(1-\kappa_{a}-\eta_{a})(\alpha_{a}-\alpha_s),
\end{align}

\begin{align}
    \nabla^2 \Cinchat_a(\actvec_\C, \actvec_{{\C}_0})
    = \kappa_a H_a[a, a]+ 2(1-\kappa_a-\eta_a).
\end{align}
\fi

Since the implementation and non-compliance cost components are unchanged in this approximation (linear and quadratic respectively), their first two derivatives can be easily computed and added to the above to obtain the first two derivatives of the overall function $\Cohat_a(\actvec)$.
%If this function is concave (i.e.,  the second derivative above is non-positive) or monotonic (as indicated by the first derivative), the optimal point lies in one of the boundary points and can be obtained by simply comparing the values of the function at these two points. Otherwise, 
We start by writing the Lagrange dual of the quadratic program:
$$L_a(\alpha_a, \beta_a, \rho_a)=\Cohat_a(\actvec)-\beta_a(\alpha_a)-\rho_a(1-\alpha_a).$$

By KKT conditions we have:
\begin{align*}
\beta_a \alpha_a & = 0, 
\rho_a (1-\alpha_a)  = 0, \\
\alpha_a &\leq 1, 
-\alpha_a \leq 0, \\
\beta_a &\geq 0, 
\rho_a  \geq 0.
\end{align*}

and the stationary condition:
\begin{align} \label{cos1} 
    \partial L_a/\partial \alpha_a
    &= \partial \Cinchat_a(\actvec_\C, \actvec_{{\C}_0})/\partial \alpha_a
    - \beta_a + \rho_a = 0,\\
    \text{where } \partial \Cinchat_a(\actvec_\C, \actvec_{{\C}_0})/\partial \alpha_a &= \kappa_{a} \big[ G_{a}[{a}] + H_{a}[{a}, \cdot](\actvec_{\C}-\actvec_{{\C}_0}) \big] \notag\\
    &\qquad - \eta_{a} + 2(1-\kappa_{a}-\eta_{a})(\alpha_{a}-\alpha_s). \notag
\end{align}%\kappa_a \big[ G_a[a] + H_a[a, \cdot](\actvec_m-\actvec_{m0}) \big]- \eta_a + 2(1-\kappa_a-\eta_a)(\alpha_a-\alpha_s)

To linearize the complementary slackness conditions, we introduce two binary variables $\rho_{b,a}, \beta_{b,a}$, and a large number $M$ for each county. Then the conditions can be linearized as the following constraints 

\begin{align}
    1-\alpha_a &\leq \rho_{b,a}, \\ 
    \rho_a &\leq M(1-\rho_{b,a}),\\
    \alpha_a &\leq 1-\beta_{b,a}, \\
    \beta_a &\leq M\cdot \beta_{b,a},\\
    \quad &\rho_{b,a}, \beta_{b,a} \in \{1, 0\}. \label{cosN}
\end{align}

\ifarxiv
\begin{algorithm}[ht!]
\caption{Taylor-Iter-BR}
\label{alg:taylor_iter}
\textbf{Input}: 
$T$ (rounds of repeating the approximation), $\actvec_{\C_0}$ (County-level action profile around which infection cost is approximated). $a\in \S$ (State whose best response is computed), $\actvec_\S$ (action profile of all States).\\
\begin{algorithmic}[1] %S[1] enables line numbers
\STATE Initialize $t \leftarrow 0$.
\WHILE{$t \leq T$}
\STATE $\alpha_a,  \actvec_{\C} \leftarrow \text{QIP-Best-Response}(\actvec_\S, \actvec_{\C_0}, a)$. \label{qip_step}
\STATE $\actvec_{\C_0} \leftarrow  \actvec_{\C}$. 
\STATE $t\leftarrow t+1$.
\ENDWHILE
\STATE return $\alpha_a$, $\actvec_{\C_0}$.
\end{algorithmic}
\end{algorithm}
\else
\begin{algorithm}[ht!]
\caption{Taylor-Iter-BR}
\label{alg:taylor_iter}
\textbf{Input}: 
$T$ (rounds of repeating the approximation), $\actvec_{\C_0}$ (County-level action profile around which infection cost is approximated). $a\in \S$ (State whose best response is computed), $\actvec_\S$ (action profile of all States).\\
\begin{algorithmic}[1] %S[1] enables line numbers
\State Initialize $t \leftarrow 0$.
\While{$t \leq T$}
\State $\alpha_a,  \actvec_{\C} \leftarrow \text{QIP-Best-Response}(\actvec_\S, \actvec_{\C_0}, a)$. \label{qip_step}
\State $\actvec_{\C_0} \leftarrow  \actvec_{\C}$. 
\State $t\leftarrow t+1$.
\EndWhile
\State return $\alpha_a$, $\actvec_{\C_0}$.
\end{algorithmic}
\end{algorithm}
\fi

Thus, we can solve for the equilibrium in the County-level by solving the system of above constraints for each county $a$ when the State's strategies are given.
Then, the best-response of a State $s\in\S$ can be formulated as a bi-level optimization, which we call QIP-Best-Response. Let $\Cohat_s(\actvec, \actvec'_\C)$ denote the total cost of State $s$ when replacing each infection term for the Counties $a\in\C$ by $\Cinchat_a(\actvec_\C, \alpha'_a)$.
\begin{align*}
&\alpha^*_s = \text{argmin}_{\alpha_s, \{\alpha_a, \beta_a, \rho_a\}}\Cohat_s(\actvec, \actvec'_\C), \quad\\ &\text{s.t. }\alpha_s, \alpha_a \in [0,1]; \text{constraints } (\ref{cos1})-(\ref{cosN}), \forall a.  
\end{align*}

\ifarxiv
\begin{algorithm}[ht!]
\caption{QIP}
\label{alg:qip-brd}
\textbf{Input}: 
$\Delta_g$ (discretization factor for Government's action), $T_2$ (rounds of best-response before termination for the States), $k_2$ (number of States best responding), $e_2$ (approximation error tolerance for States), $\actvec_{\C_0}$ (County-level action profile around which infection cost is approximated).\\
\begin{algorithmic}[1] 
\STATE Initialize $\alpha'_g\leftarrow 0$, $\Co_g\leftarrow \infty$ and $\actvec \leftarrow \vec{0}.$
\WHILE{$\alpha'_g \leq 1$}
\STATE Initialize $\actvec_\S$ randomly, $t\leftarrow 0$ and
$\epsilon_2\leftarrow \infty$.
\WHILE{$t \leq T_2$ or $\epsilon_2 \leq e_2$}
\STATE Select $k_2$ players from $\S$.
\FOR{Each selected player $a$}
\STATE $\alpha'_a ,  \actvec_{\C}\leftarrow \text{Taylor-Iter-BR}(\actvec_\S, \actvec_{\C_0}, a)$. 
\ENDFOR
\FOR{Each selected player $a$}
\STATE Update $\actvec_\S$ by setting $\alpha'_a$ instead of $\alpha_a$.
\ENDFOR
\STATE $t\gets t+1$.
\STATE Calculate $\epsilon_2$ for $\alpha_\S$.
\STATE Update $\epsilon_2$ if lower than the current value.
\ENDWHILE
\STATE Compute $\Co_g(\actvec)$.
\STATE If lower than the current value update $\Co_g$ and set $\actvec = <\alpha'_g, \actvec_\S, \actvec_\C>$. 
\STATE $\alpha'_g\gets \alpha'_g+\Delta_g$.
\ENDWHILE
\STATE return $\actvec$.
\end{algorithmic}
\end{algorithm}
\else
\begin{algorithm}[ht!]
\caption{QIP}
\label{alg:qip-brd}
\textbf{Input}: 
$\Delta_g$ (discretization factor for Government's action), $T_2$ (rounds of best-response before termination for the States), $k_2$ (number of States best responding), $e_2$ (approximation error tolerance for States), $\actvec_{\C_0}$ (County-level action profile around which infection cost is approximated).\\
\begin{algorithmic}[1] 
\State Initialize $\alpha'_g\leftarrow 0$, $\Co_g\leftarrow \infty$ and $\actvec \leftarrow \vec{0}.$
\While{$\alpha'_g \leq 1$}
\State Initialize $\actvec_\S$ randomly, $t\leftarrow 0$ and
$\epsilon_2\leftarrow \infty$.
\While{$t \leq T_2$ or $\epsilon_2 \leq e_2$}
\State Select $k_2$ players from $\S$.
\For{Each selected player $a$}
\State $\alpha'_a ,  \actvec_{\C}\leftarrow \text{Taylor-Iter-BR}(\actvec_\S, \actvec_{\C_0}, a)$. 
\EndFor
\For{Each selected player $a$}
\State Update $\actvec_\S$ by setting $\alpha'_a$ instead of $\alpha_a$.
\EndFor
\State $t\gets t+1$.
\State Calculate $\epsilon_2$ for $\alpha_\S$.
\State Update $\epsilon_2$ if lower than the current value.
\EndWhile
\State Compute $\Co_g(\actvec)$.
\State If lower than the current value update $\Co_g$ and set $\actvec = <\alpha'_g, \actvec_\S, \actvec_\C>$. 
\State $\alpha'_g\gets \alpha'_g+\Delta_g$.
\EndWhile
\State return $\actvec$.
\end{algorithmic}
\end{algorithm}
\fi

Note that QIP-Best-Response computes simultaneously strategies for a State under consideration and all best-responding Counties (under all States), fixing the strategies of the Government and other States (Line~\ref{qip_step} of Algorithm~\ref{alg:taylor_iter}). To summarize, we replace the each County-level quadratic program by a system of (mixed-integer) linear constraints, and use a concatenation of these systems for all Counties as the constraint system for the State-level optimization problem under consideration.   

However, the performance of QIP is sensitive to the point around which we approximate the infection cost with Taylor series. 
As our strategy space $[0,1]$ is bounded, we propose an iterative  algorithm called Taylor-Iter-BR (Algorithm~\ref{alg:taylor_iter}) which is inspired by trust-region methods in optimization~\cite{nocedal2006numerical}. The idea is that we approximate the infection cost around the action profile we found in the previous iteration. We compare Algorithm~\ref{alg:taylor_iter} with Algorithm~\ref{alg:ne} for run-time and solution quality in the compliant setting (Section~\ref{sec:eqbm_select} Remark~\ref{compli_setting}) under a symmetric transport matrix, i.e., $r_{aa'}=\frac{1}{|\C|}=\frac{1}{10}$ for any two Counties $a,a'$; the goal is to compute the best response for the Government.
As shown in the Figure~\ref{fig:QIP}, Taylor-Iter-BR with only 2 iterations works as well as BRD with a grid size of $0.01$, but it runs significantly faster. We report further experiments comparing the two approaches in Appendix~\ref{sec:methods-qip-appx}.

\begin{figure}[ht!]
    \centering
	\begin{subfigure}[t]{0.5\textwidth}
		\includegraphics[scale=0.4]{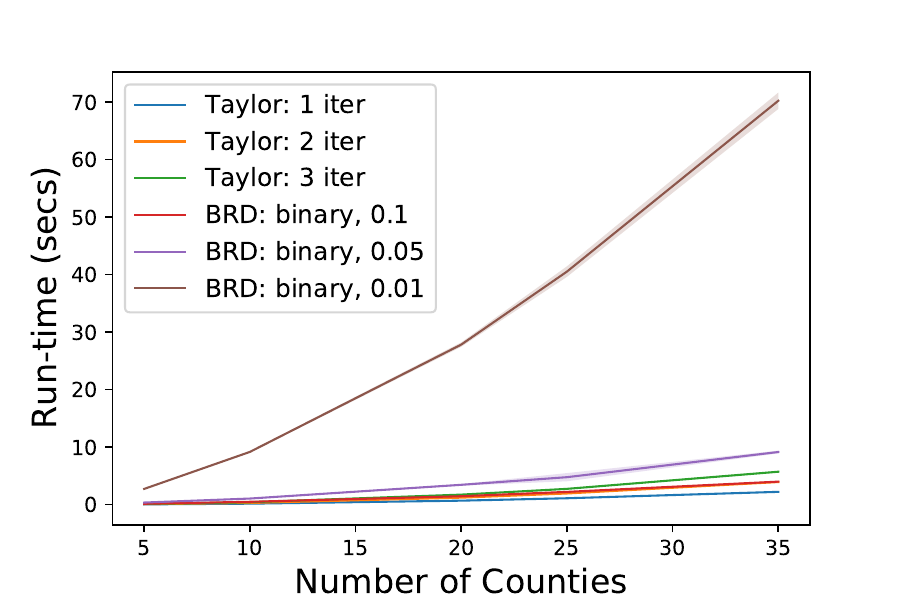}
		\caption{Run-time of BRD and Taylor-Iter-BR. \label{fig:run_time}}
	\end{subfigure}~
	\begin{subfigure}[t]{0.5\textwidth}
		\includegraphics[scale=0.36]{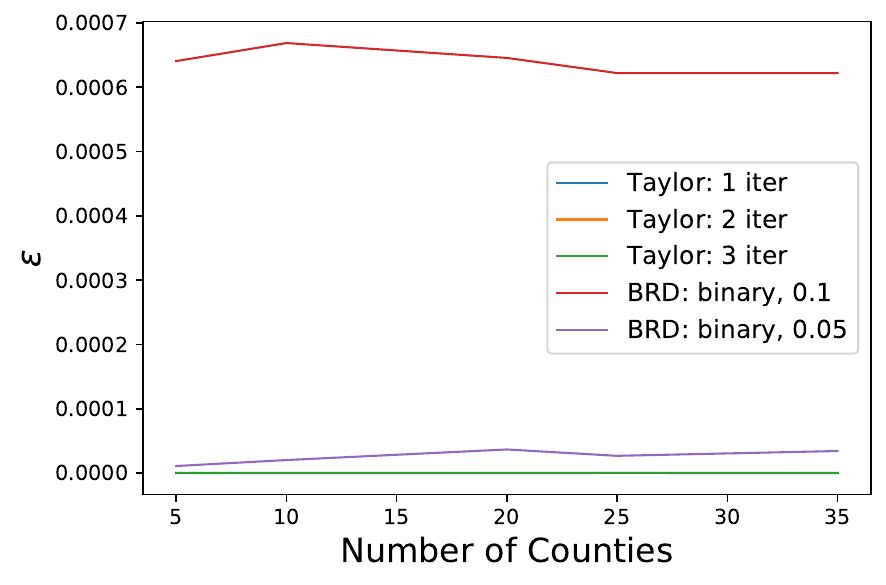}
		\caption{Performance of BRD and Taylor-Iter-BR. \label{fig:1_N_ep}}
	\end{subfigure}\\
	\begin{subfigure}[t]{0.5\textwidth}
		\includegraphics[scale=0.37]{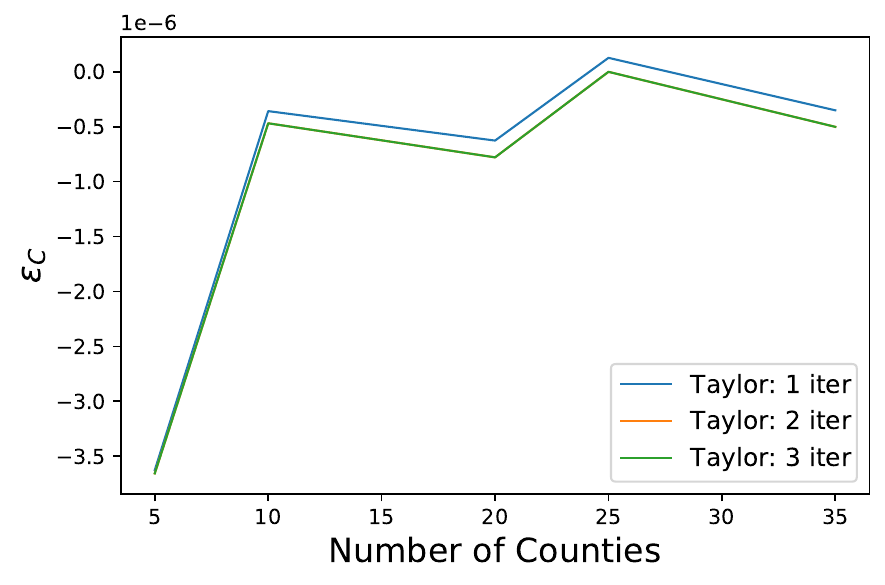}
		\caption{$\epsilon_\C$ of Taylor-Iter-BR.  \label{fig:1_N_ep_M}}
	\end{subfigure}
	\caption{Run-time and performance comparison of BRD with grid size $\Delta_g = 0.05,0.1$ and Taylor-Iter-BR with 1 to 3 iterations as a function of number of Counties. Performance $\epsilon$ is evaluated by the BRD with grid size $\Delta_g = 0.01$. }
	\label{fig:QIP}
\end{figure}

Using Algorithm~\ref{alg:taylor_iter} as a subroutine, we design Algorithm~\ref{alg:qip-brd} as follows for approximate SPNE computation. The Government performs a grid search over all of its possible actions. For each of these actions, we can run the best response dynamics over the States (until convergence), knowing that each State can compute its best response using Algorithm~\ref{alg:taylor_iter} which also determines the action of the Counties. The Government then selects an action that maximizes its payoff which in turn determine the actions of the States and Counties. The advantage of this approach is that search is not necessary at the County-level.

\section{Experiments}\label{sec:expts}
We will now report our experiments that demonstrate how the HECG model (Section~\ref{sec:game_model}) and equilibrium approximation methodologies (Section~\ref{sec:methods}) can offer insights into phenomena emerging from structured strategic interactions among hierarchical policy-makers. We set up two different \textit{worlds} based on synthetic data (Section~\ref{sec:expt_synth}) and real-world data (Section~\ref{sec:ny-nj}). In each world, we conducted three sets of experiments:
\begin{itemize}
\item In the first set (Sections~\ref{sec:NYNJ-exp-soc-wel} and~\ref{sec:exp-soc-wel}), we gauged how different parameter configurations, including different degrees of centralization, impact the overall social welfare of all agents in our HECG.
\item In the second set (Sections~\ref{sec:exp-free-riding} and~\ref{sec:NYNJ-free-riding}), we explored conditions under which evidence of free-riding is or is not observed in equilibrium.
\item In the final set (Sections~\ref{sec:exp-fairness} and \ref{sec:NYNJ-fairness}), we investigated how different degrees of centralization and mismatched priorities of HECG agents impact fairness in the distribution of costs.
\end{itemize}
We explain how we quantify the (negative) social welfare, proxy for free-riding, and the cost distribution fairness measure, applicable to both worlds, in Sections~\ref{sec:NYNJ-free-riding}, ~\ref{sec:exp-free-riding}, and~\ref{sec:exp-fairness} respectively. 
Given the large and complex set of parameters characterizing our multi-agent system, we report our salient experimental results only. 

\subsection{Experiments On Synthetic Data}\label{sec:expt_synth}
In all these experiments, we have two States, denoted simply by $1$ and $2$ (also in subscripts), and five Counties in each of the States  with equal populations (i.e., shares). Thus, $n=10$ and $n_1=5$ (see Figure~\ref{fig:model}). We will denote the policies of the two States by $\alpha_1$ and $\alpha_2$ respectively. 
For approximate equilibrium computation, we used BRD (Section~\ref{sec:brd} Algorithm~\ref{alg:ne}). 

\subsubsection{Social welfare}\label{sec:exp-soc-wel}
In the first set of experiments, we measure the negative social welfare --- or \textit{social cost} --- as the overall cost of infection and implementation from the Government's perspective as defined in Section~\ref{sec:model}: $\Co_{g}(\actvec) = \kappa_{g} \Cinc_{g}(\actvec) + (1-\kappa_{g})\Cdec_{g}(\actvec)$.
We compare this cost across four different (de)centralization scenarios and different parameter configurations for each scenario (see below). The centralization scenarios are as follows:
\begin{itemize}
    \item Centralized County-Specific (\CCS): The Government selects a potentially different policy for each of the $n$ Counties to minimize its own overall cost $\Co_{g}(\actvec)$. This entails solving an optimization problem in the space $[0,1]^n$. Here, States are immaterial and Counties simply follow the respective policies set by the Government.
    \item Centralized Uniform (\CU): The Government selects a single "nationwide" policy to minimize its own overall cost, and each County follows this policy. This can be viewed as a version of the \CCS optimization problem where all County-level policies are constrained to be equal, hence the solution space is $[0,1]$. 
    \item 2-level Equilibrium (\EQtwo): Only the Government and States are autonomous; each County is constrained to comply with its respective State. This is identical to the compliant setting introduced in Section~\ref{sec:eqbm_select} Remark~\ref{compli_setting}. We compute the solution by the 2-level version of the BRD algorithm (Section~\ref{sec:brd} Algorithm~\ref{alg:ne}, but taking $l=2$ as the lowest level).
    \item 3-level Equilibrium (\EQthree): The Government, all States, and all Counties are autonomous. This corresponds to the (full) HECG as delineated in Section~\ref{sec:model}. We compute the solution by the (full) BRD algorithm as described in Section~\ref{sec:brd} Algorithm~\ref{alg:ne}.
\end{itemize}
Our parameter configurations are as follows. We perform experiments for all the above scenarios under \textit{transport symmetry}, which means that we set the transport matrix $R$ proportional to an identity matrix: $r_{aa'}=\frac{1}{|\C|}=\frac{1}{10}$ for any two Counties $a,a'$.  The two States have equal populations $N_1=N_2=500$, hence $\mu_1=\mu_2=0.5$. States $a \in \{1,2\}$ have identical weight vectors where we vary the shared non-compliance cost weight $\gamma$ in $[0,1]$ and set $\kappa_a = 0.9(1-\gamma)$ and $\eta_a = 0.1(1-\gamma)$ accordingly; we also vary the Government's single weight $\kappa_g$ in $(0,1)$ (see below for details). 

For every scenario, all Counties in the same State have equal initial infection rates. We fix the initial infection rate of every County in State 1 at $0.1$ (low) while that of every County in State 2 is a parameter that we vary in $\{0.7,0.8,0.9\}$; we refer to the latter as \II.
Under \CCS and \CU, there is no possibility of non-compliance, by design. Under \EQtwo, the States' weight vectors as described above complete the instantiation of all parameters. For \EQthree, each County has a weight vector identical to that of the respective State (i.e., all agents but the Government have the same weight vector). Thus, \CU/\CCS, \EQtwo, and \EQthree represent increasing levels of decentralization, i.e., discretion given to lower-level agents in the hierarchy; within each of the \EQtwo and \EQthree scenarios, a lower value of the shared non-compliance cost weight $\gamma$ corresponds to a higher degree of decentralization, and the special case $\gamma=1$ coincides with \CU. 

Note that, for any parameter configuration, the \CCS optimal social cost is the best achievable social cost across all scenarios considered and acts a lower bound for those under \CU, \EQtwo, and \EQthree by design. However, it is not \textit{a priori} obvious whether \EQtwo and \EQthree, where lower-level policy-makers have some autonomy, can improve upon \CU, the constrained but computationally and conceptually simpler centralized solution, in terms of the social cost. 

For \CCS and \CU, we use projected stochastic gradient descent to compute the optimal social cost.\footnote{We used a learning rate of $0.2$ and ran $10^4$ iterations of stochastic gradient descent (SGD) in PyTorch, projecting each iterate back to $[0,1]$ for \CU and $[0,1]^n$ for \CCS. For each value of \II, we ran SGD thrice, initialized at $0, 0.5, 1$ respectively for \CU and at vectors of length $n$ with each entry equal to $0, 0.5, 1$ respectively for \CCS; finally, we took the minimum social cost over the 3 runs as our solution for the corresponding scenario.} 
For \EQtwo and \EQthree, the BRD discretization factor for best-response search is $0.05$ and approximation error tolerances are set as $e_1=e_2=e_3=10^{-6}$; the results are averaged over $10$ trials, each trial characterized by a random initialization of BRD, and the error bars (one standard error) are shown as shaded regions (for most curves, they are too small to see at the scale of our plots). We present a representative sample of our results in Figures~\ref{fig:soc_cost} and~\ref{fig:soc_cost_gamma}.
\begin{figure}[ht!]
    \centering
	\begin{subfigure}[t]{0.48\textwidth}
		\includegraphics[scale=0.38]{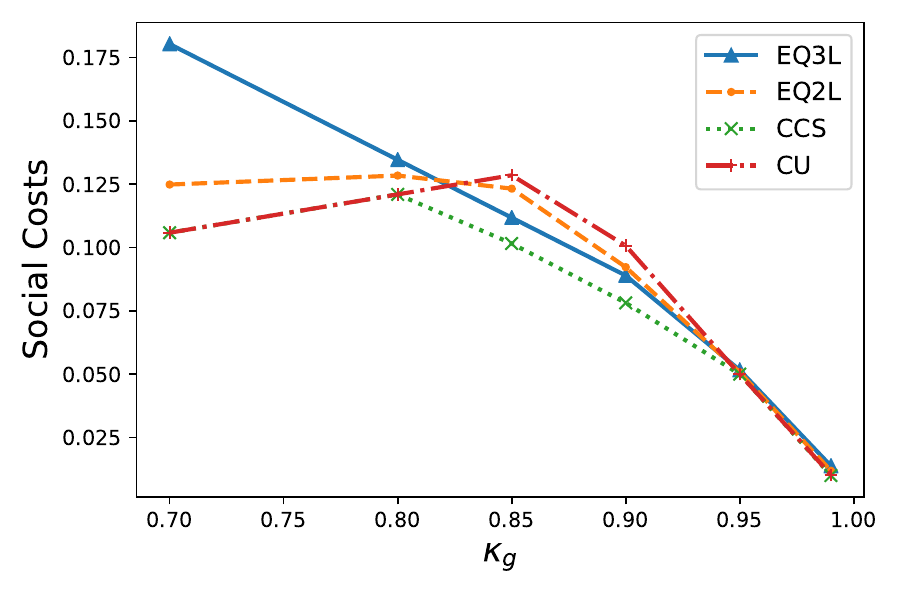}
		\caption{\II $=0.8$, $\gamma=0.15$. \label{fig:soc_cost1}}
	\end{subfigure}~
	\centering
	\begin{subfigure}[t]{0.48\textwidth}
		\includegraphics[scale=0.38]{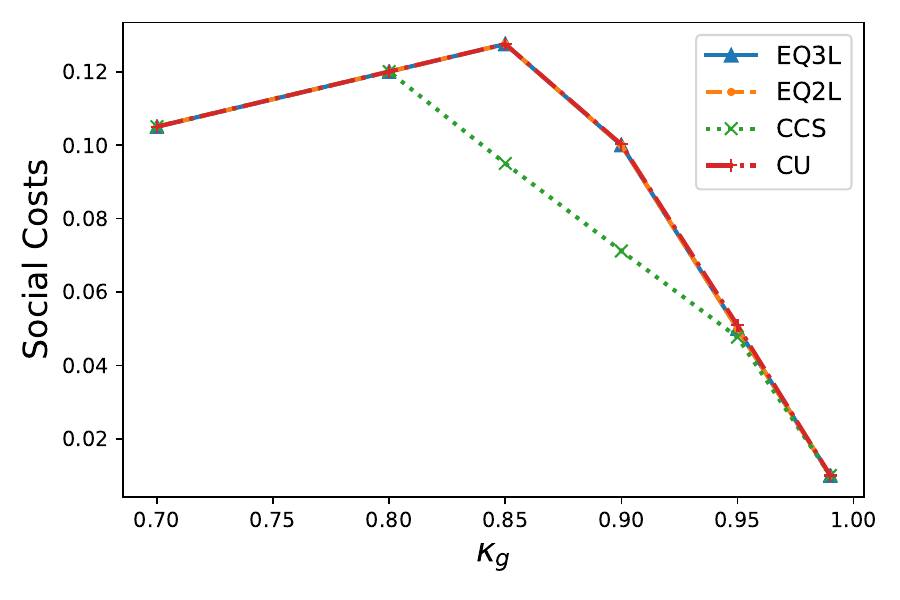}
		\caption{\II $=0.9$, $\gamma=0.8$. \label{fig:soc_cost2}}
	\end{subfigure}
	\caption{Social cost (y-axis) for each of our centralization scenarios as a function of the Government's infection cost weight (x-axis), with \II and $\gamma$ as parameters (two pairs of parameter-values shown). Note that $\gamma$ is irrelevant for \CCS and \CU.}
	\label{fig:soc_cost}
\end{figure}

Our first observation is that the social cost exhibits a non-monotonic (in general), concave variation with respect to $\kappa_g$ (see Figure~\ref{fig:soc_cost}). Moreover, consistently for all values of \II, social costs for \CCS and \CU coincide except possibly over a small regime $\kappa_g \in [0.80,0.95]$. Thus, outside of this regime, decentralization (\EQtwo and \EQthree) cannot offer any improvement in social cost over the centralized solution \CU. Figures~\ref{fig:soc_cost_gamma1} and~\ref{fig:soc_cost_gamma2} depict the variation in social cost with $\gamma$ under \EQtwo and \EQthree and their comparison with that under \CU/\CCS for two representative values of $\kappa_g$ outside the above regime. We see that, in general, lower values of $\gamma$ result in higher social costs, with \EQthree being at least as high as \EQtwo; but the costs are identical to that for centralized solutions above a high enough value of $\gamma$ (which could be as low as $0.4$ depending on the parameters). This provides evidence that decentralization can be detrimental to social cost unless $\kappa_g$ is high (but not too close to $1$).

\begin{figure}[ht!]
    \centering
	\begin{subfigure}[t]{0.45\textwidth}
		\includegraphics[scale=0.37]{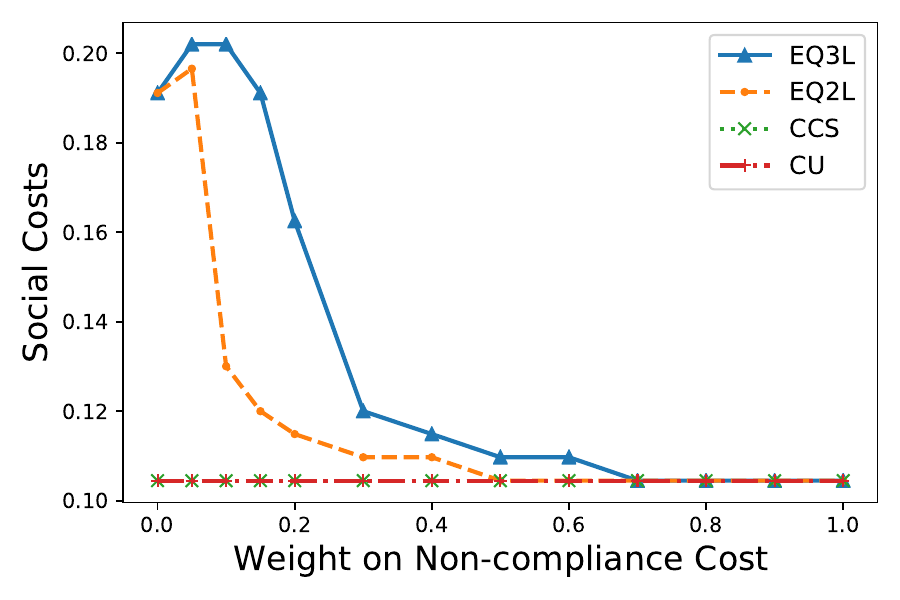}
		\caption{\II $=0.7$, $\kappa_g=0.7$ \label{fig:soc_cost_gamma1}}
	\end{subfigure}~
	\begin{subfigure}[t]{0.45\textwidth}
		\includegraphics[scale=0.37]{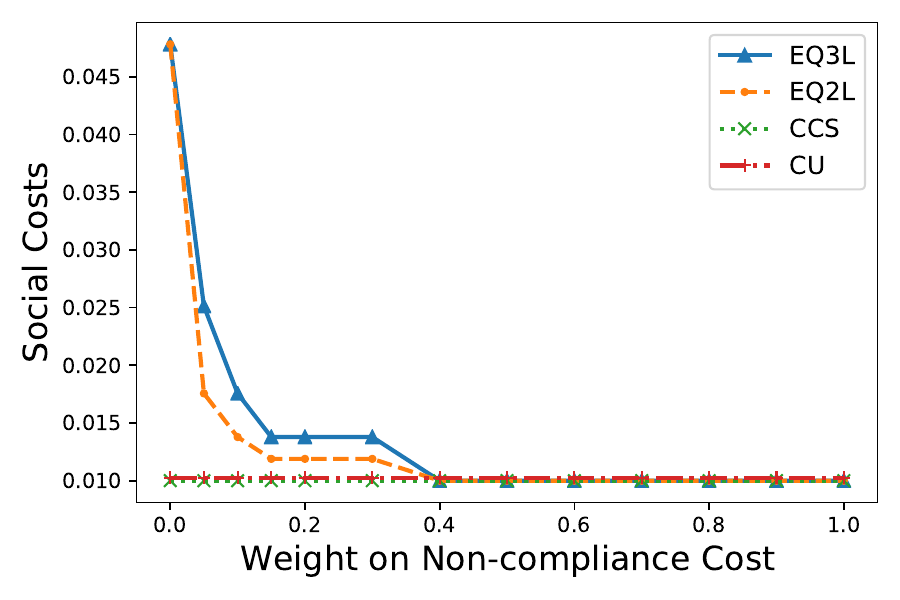}
		\caption{\II $=0.8$, $\kappa_g=0.99$ \label{fig:soc_cost_gamma2}}
	\end{subfigure}\\ 
	\begin{subfigure}[t]{0.45\textwidth}
		\includegraphics[scale=0.37]{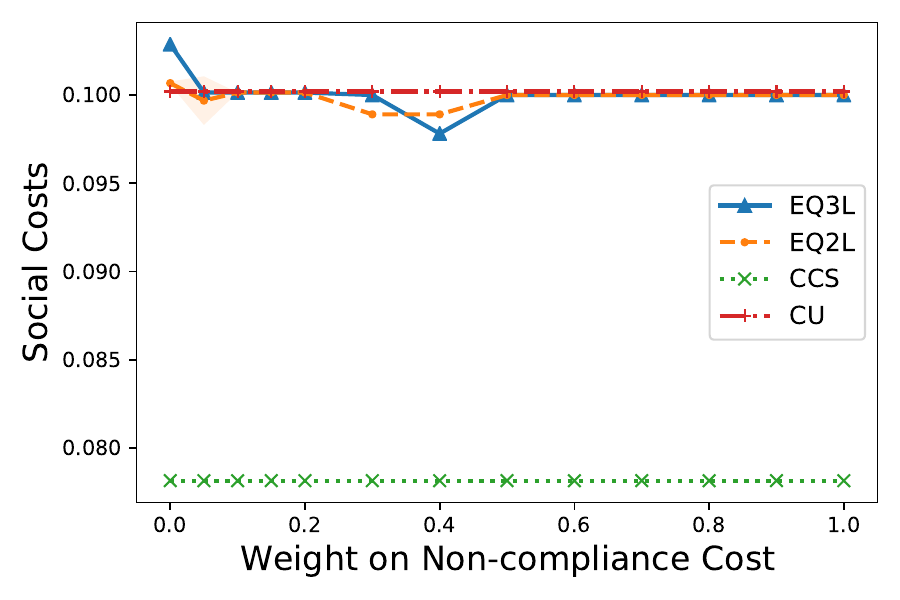}
		\caption{\II $=0.7$, $\kappa_g=0.9$ \label{fig:soc_cost_gamma3}}
	\end{subfigure}~
	\begin{subfigure}[t]{0.45\textwidth}
		\includegraphics[scale=0.37]{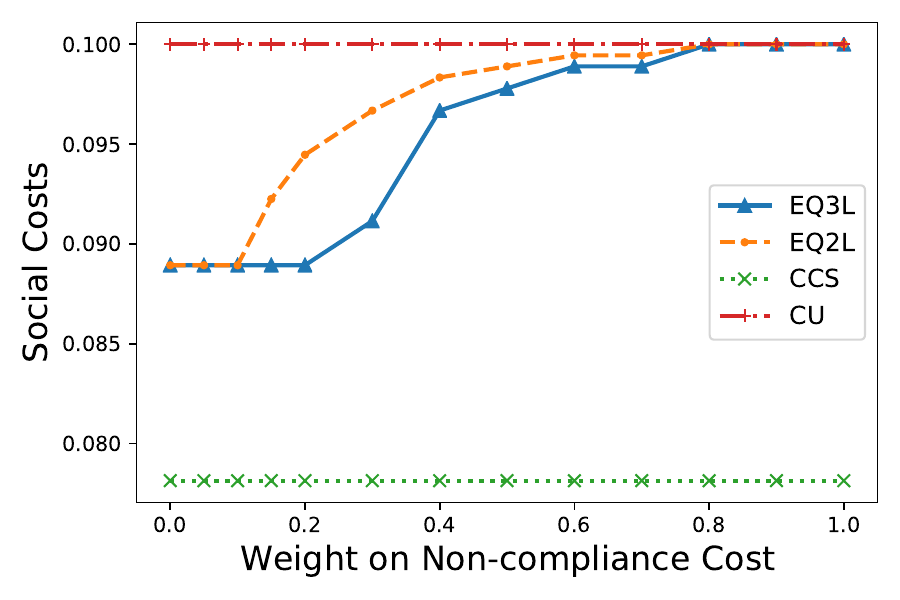}
		\caption{\II $=0.8$, $\kappa_g=0.9$ \label{fig:soc_cost_gamma4}}
	\end{subfigure}\\
	\begin{subfigure}[t]{0.45\textwidth}
		\includegraphics[scale=0.37]{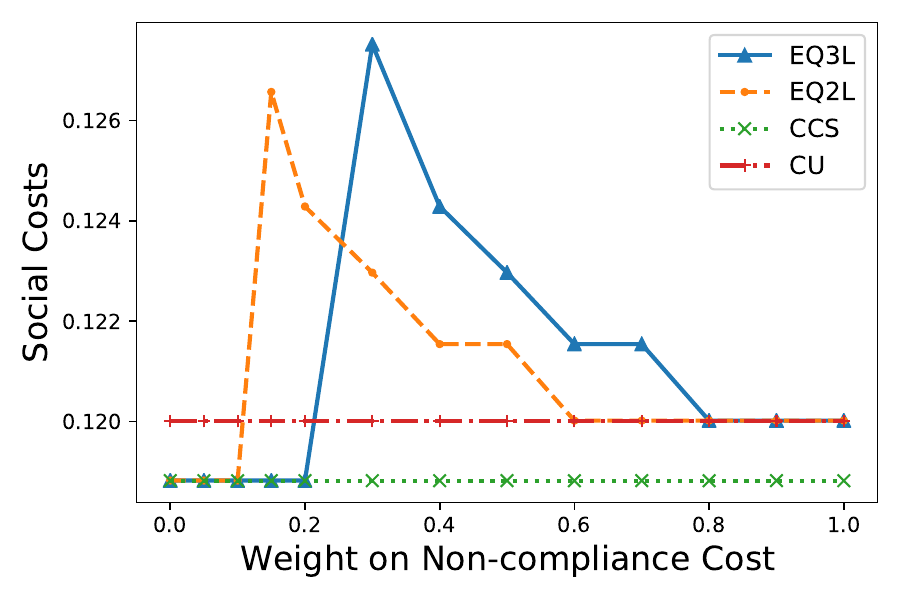}
		\caption{\II $=0.9$, $\kappa_g=0.8$ \label{fig:soc_cost_gamma5}}
	\end{subfigure}~
	\begin{subfigure}[t]{0.45\textwidth}
		\includegraphics[scale=0.37]{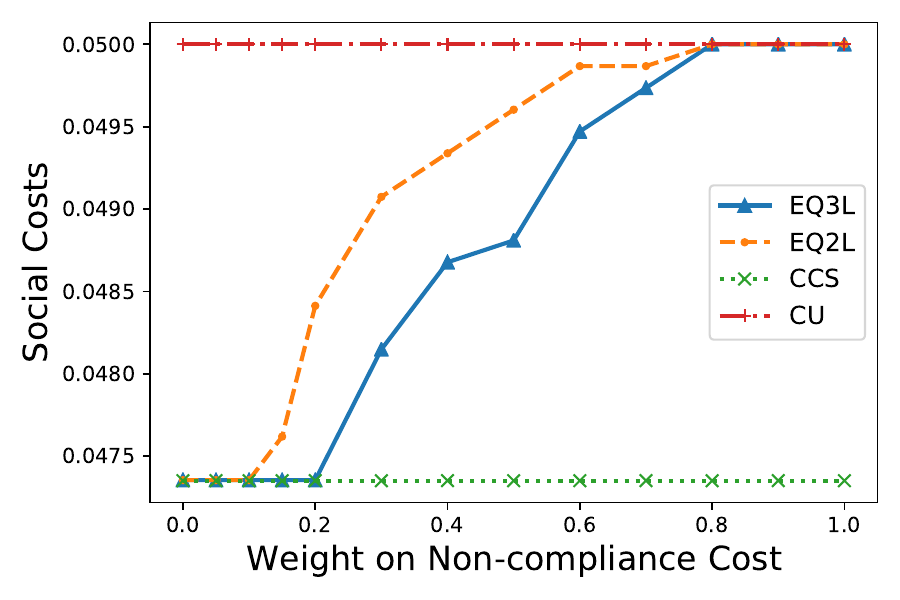}
		\caption{\II $=0.9$, $\kappa_g=0.95$ \label{fig:soc_cost_gamma6}}
	\end{subfigure}
	\caption{Social cost (y-axis) for each of our centralization scenarios as a function of the shared non-compliance cost weight (x-axis), with \II and $\kappa_g$ as parameters.}%(two pairs of parameter-values shown)
	\label{fig:soc_cost_gamma}
\end{figure}

Let us now turn to the more interesting regime $\kappa_g \in [0.80,0.95]$ (Figures~\ref{fig:soc_cost_gamma3} through~\ref{fig:soc_cost_gamma6}). There exist parameter configurations (Figures~\ref{fig:soc_cost_gamma3},~\ref{fig:soc_cost_gamma5}) for which the decentralized scenarios are comparable in social cost to \CU (notice the difference in scale between Figures~\ref{fig:soc_cost_gamma3} and~\ref{fig:soc_cost_gamma5}, and the smaller gap between \CU and \CCS in the latter). But more importantly, there are values of \II for which lower $\gamma$ (willingness to comply) generally leads to lower social costs than \CU, \EQthree being weakly lower than \EQtwo  (Figures~\ref{fig:soc_cost_gamma4},~\ref{fig:soc_cost_gamma6}), sometimes even as low as \CCS (Figure~\ref{fig:soc_cost_gamma6}). Thus, the main takeaway from these experiments is the identification of a non-trivial set of conditions under which decentralization can offer better social costs than a uniform centralized policy.

\subsubsection{Free-riding}\label{sec:exp-free-riding}
In this section, we will characterize free-riding between States. We begin with the rationale for our measurement and visualization of free-riding. Consider the compliant setting (\EQtwo). Suppose State 2 has a higher initial infection rate than State 1; then, intuitively, it may prefer a weak distancing policy ($\alpha_{2} \gg 0$) to reduce its implementation cost as most of its population is already infected; at the same time, since infection can spread from State 2, State 1 could still suffer a large infection cost unless it employs a strong distancing policy ($\alpha_{2} \ll 1$). This creates the possibility for State 2 to free-ride off State 1; but, whether this actually happens depends on the combination of parameters (including non-compliance issues with the Government), and the same possibility may be created by different sets of conditions. Based on this intuition, we use the difference between policy strengths of the States as an indicator of the degree of free-riding. Intuitively, as $\alpha_2-\alpha_1$ approaches $1$ (resp. $-1$), the degree of free-riding of State 2 (resp. State 1) off State 1 (resp. State 2) increases. How should we operationalize free-riding for non-compliant settings, i.e., when Counties have discretion too? For such settings, $\alpha_a$ is merely a recommendation by State $a$, and the County-level equilibrium policies are the ones that are realized (potentially different across Counties under the same State). Hence, in our free-riding proxy, we replace $\alpha_a$ with $\langle\alpha_a \rangle$ defined as the average of the (equilibrium) policies set by all Counties in State $a \in \{1,2\}$. Clearly, $\langle\alpha_a \rangle=\alpha_a$ for the compliant setting; hence, $\langle\alpha_2 \rangle-\langle\alpha_1 \rangle$ is our general proxy for the degree of free-riding by State 2.

\begin{figure}[ht!]
    \centering
	\begin{subfigure}[t]{0.45\textwidth}
 		\includegraphics[scale=0.36]{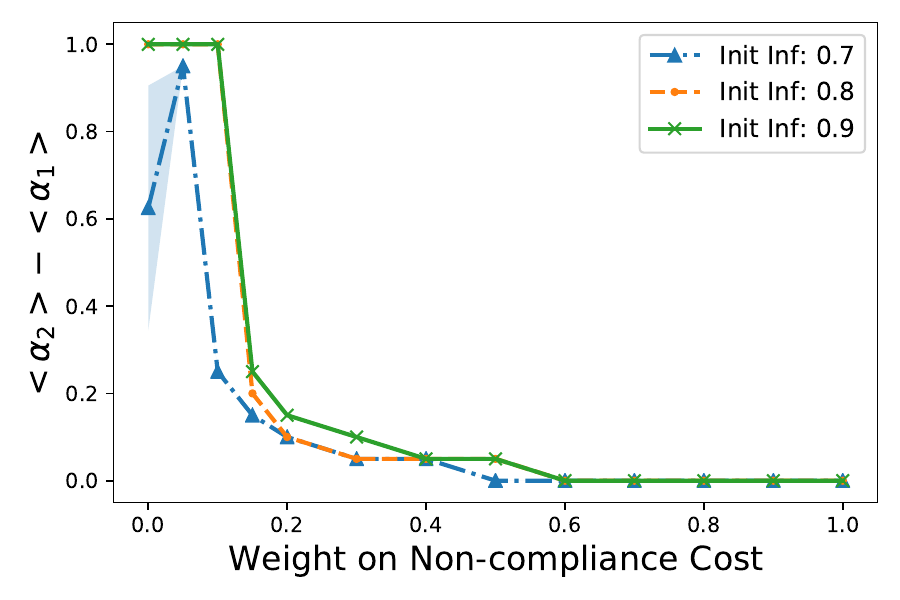}
		\caption{\EQtwo, $\kappa_g=0.7$ \label{fig:free_comp1}}
	\end{subfigure}~	\begin{subfigure}[t]{0.45\textwidth}
		\includegraphics[scale=0.36]{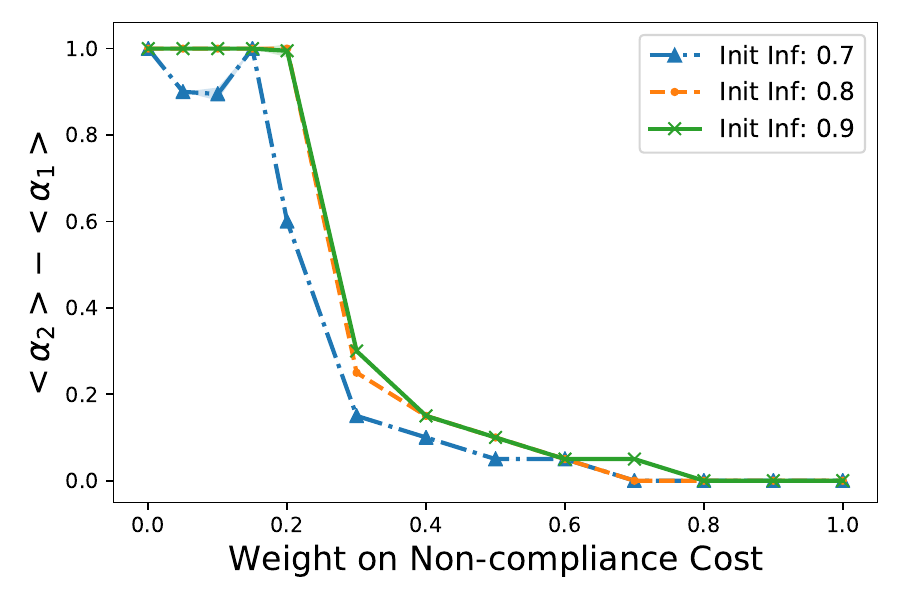}
		\caption{\EQthree, $\kappa_g=0.7$ \label{fig:freeride_noncompcounties1}}
	\end{subfigure}\\
	\begin{subfigure}[t]{0.45\textwidth}
 		\includegraphics[scale=0.36]{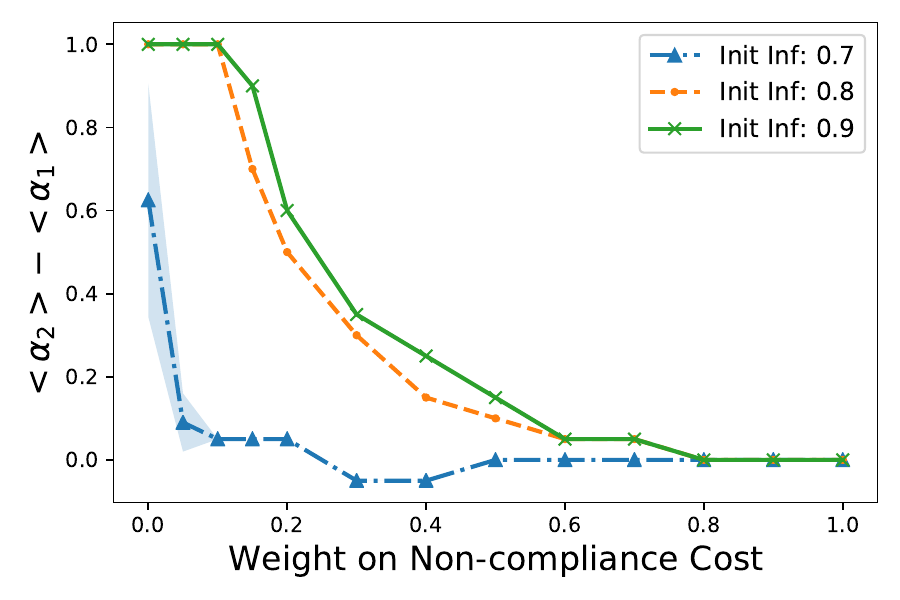}
		\caption{\EQtwo, $\kappa_g=0.9$ \label{fig:free_comp2}}
	\end{subfigure}~
		\begin{subfigure}[t]{0.45\textwidth}
		\includegraphics[scale=0.36]{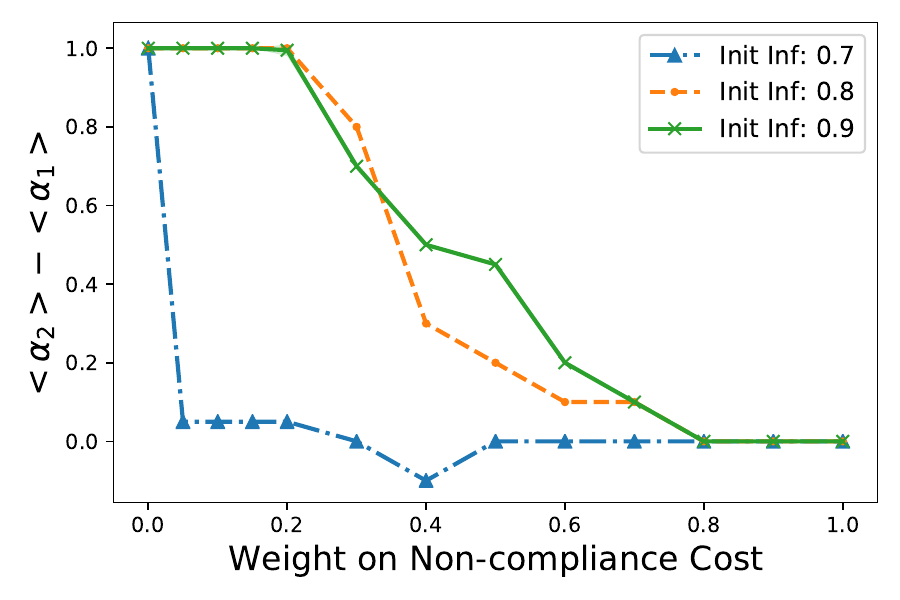}
		\caption{\EQthree, $\kappa_g=0.9$ \label{fig:freeride_noncompcounties2}}
	\end{subfigure}\\
	\begin{subfigure}[t]{0.45\textwidth}
 		\includegraphics[scale=0.36]{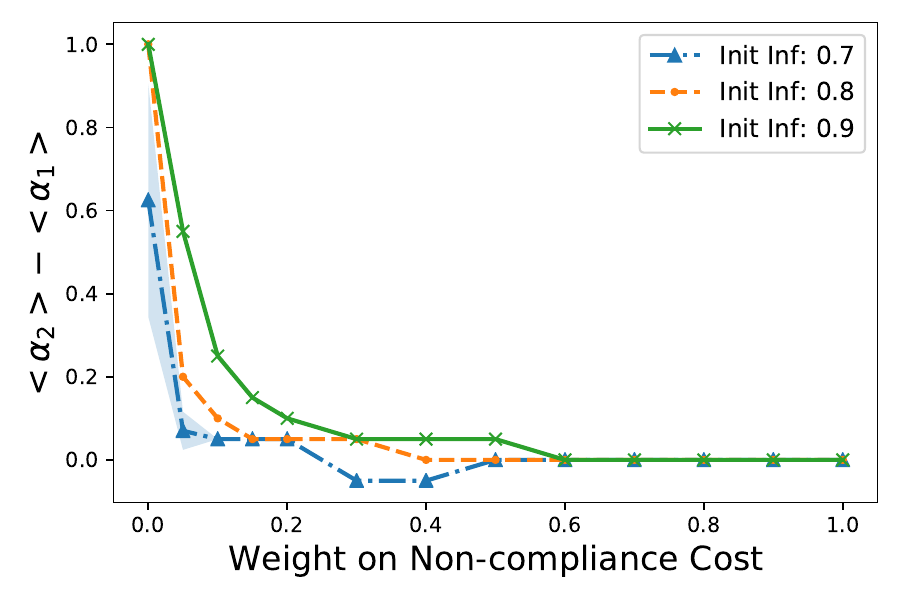}
		\caption{\EQtwo, $\kappa_g=0.99$ \label{fig:free_comp3}}
	\end{subfigure}~
	\begin{subfigure}[t]{0.45\textwidth}
		\includegraphics[scale=0.36]{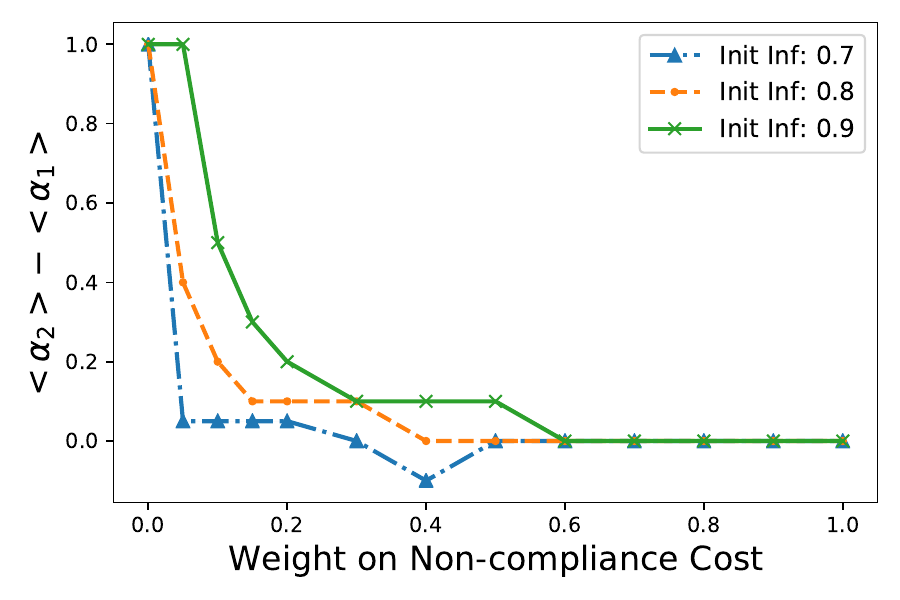}
		\caption{\EQthree, $\kappa_g=0.99$ \label{fig:freeride_noncompcounties3}}
	\end{subfigure}
	\caption{Free-riding (y-axis) as a function of non-compliance cost weight (x-axis) under transport symmetry. Each curve corresponds to a different initial infection rate of State 2 (Init Inf) as indicated in the legend. The left (resp. right) column corresponds to the compliant (resp. non-compliant) scenario.}
	\label{fig:freeriding}
\end{figure}

Our first subset of reported experiments are performed on settings with transport symmetry, as described in Section~\ref{sec:exp-soc-wel}. Other parameter configurations are analogous to those of the \EQtwo (compliant) and \EQthree (non-compliant) scenarios of Section~\ref{sec:exp-soc-wel}. As before, the crucial difference between States is in the initial infection rates only.
Figure~\ref{fig:freeriding} depicts the variation in the States' averaged County-level policy-difference $\in [-1,1]$ against their shared non-compliance weight $\gamma \in [0,1]$ for three representative values of the Government's infection cost weight $\kappa_g$ and each of the two scenarios with decentralization.
The results are averaged over $10$ trials, each trial characterized by a random initialization of BRD, and the error bars (one standard error) are shown as shaded regions. 

The broad findings for both the compliant (Figures~\ref{fig:free_comp1},~\ref{fig:free_comp2} and~\ref{fig:free_comp3}) and non-compliant (Figures~\ref{fig:freeride_noncompcounties1},~\ref{fig:freeride_noncompcounties2}, and~\ref{fig:freeride_noncompcounties3}) scenarios are similar. We observe that, when the States' shared non-compliance cost weight is close to $0$, 
there is strong evidence of free-riding by the high-infection State. This is understandable since under these conditions, the Government's recommendation is immaterial, and States act selfish-rationally and (almost) independently, consistent with the intuition presented in the first paragraph of Section~\ref{sec:exp-free-riding}. However, if States care even a little about non-compliance, the policy-difference drops  towards $0$ as we increase the non-compliance weight. The value of the non-compliance cost weight at which the drop-off begins (resp. the rate of drop-off) shows an increasing (resp. a decreasing) trend as \II increases from $0.7$ to $0.8$ to $0.9$. 
For very high values of the non-compliance weight cost, the policy-difference vanishes as all agents comply with the Government's recommendation. This suggests that decentralization may result in free-riding but only for a small regime of low compliance (smaller values of $\gamma$) and a large difference between initial infection rates between the States ($0.1$ vs $0.8$ or higher). However, for a high enough $\kappa_g$ ($0.99$ in our experiments), there is practically no evidence of free-riding as long as there is even a little inclination to comply.

We now present another subset of experiments that we performed in a compliant setting (analogous to \EQtwo in the above subset) but under a particular type of \textit{transport asymmetry}. There are many ways in which $R$ could be asymmetric; we focus on a particular class of asymmetries where one subset of Counties $\C_f\subset\C$ are \textit{globally favorite destinations} (and equally popular) and all others are equally (un)popular --- quantitatively, for each County $j$, $r_{ij}=r^H$ for each $i \in \C_f$, and  $r_{kj} = r^L$ for each $k \in \C\setminus\C_f$ for some $0 < r^L < r^H < 1$, and $\sum_{i \in \C} r_{ij}=1$. More specifically, for the remaining experiments on free-riding, we make all Counties in State 1 our globally favorite destinations (i.e., $\C_f = \{c \in \C \mid \pi(c)=1\}$), with $r^H = 0.16$, hence $r^L = 0.04$.

Another difference from the previous subset of experiments is that we try three different values $\{0.1, 0.5, 0.9\}$ for State 1's (per-County) initial infection rate, and vary State 2's initial infection rate (still referred to as \II) as before in $\{0.1, 0.2, \dots, 0.9\}$ for each of the above three choices. This is based on the intuition that, owing to transport asymmetry, the symmetry between States in terms of the initial infection rates is also lost. 
The remaining parameter configurations are similar to those of the \EQtwo scenario in the first subset -- in particular, States have the same non-compliance cost weight $\gamma$ and infection cost weight $0.9(1-\gamma)$, with $\gamma$ and $\kappa_g$ being varied as parameters. 
Figure~\ref{fig:freeride_asym} shows a representative sample of our results ($\langle\alpha_2 \rangle-\langle\alpha_1 \rangle$ averaged over $10$ random BRD initializations; one standard error shown as a shaded region which is, again, too small to see at this scale). 

\begin{figure}[ht!]
\centering
	\begin{subfigure}[t]{0.33\textwidth}
 		\includegraphics[scale=0.27]{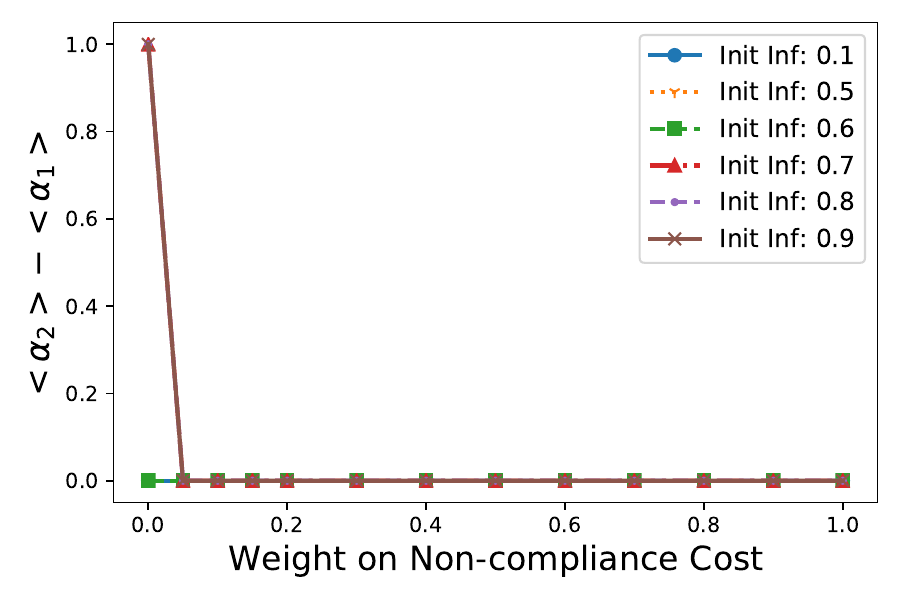}
		\caption{$\kappa_g=0.7$, State 1: $0.1$ \label{fig:asym_kg_p7_1}}
	\end{subfigure}~
	\begin{subfigure}[t]{0.33\textwidth}
 		\includegraphics[scale=0.27]{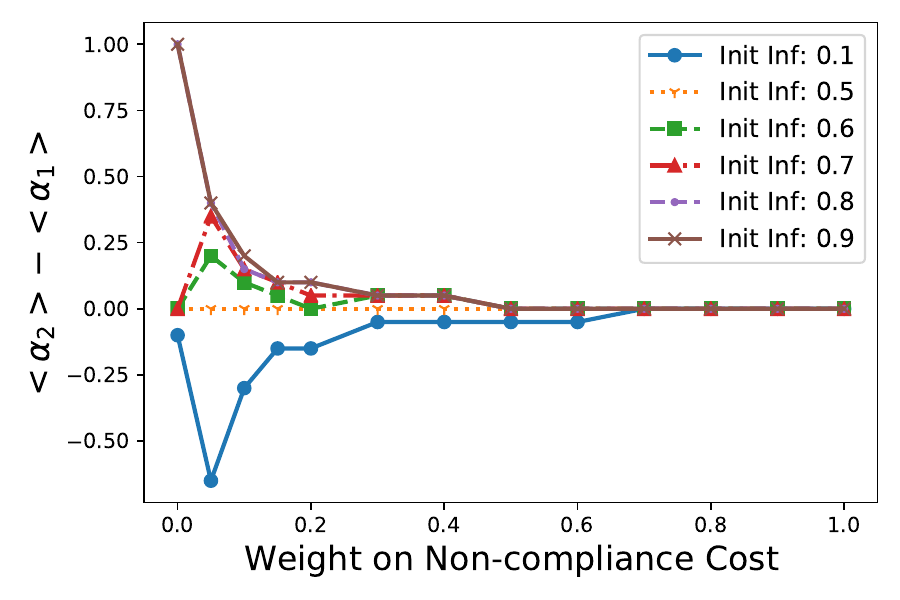}
		\caption{$\kappa_g=0.7$, State 1: $0.5$ \label{fig:asym_kg_p7_2}}
	\end{subfigure}~
	\begin{subfigure}[t]{0.33\textwidth}
 		\includegraphics[scale=0.27]{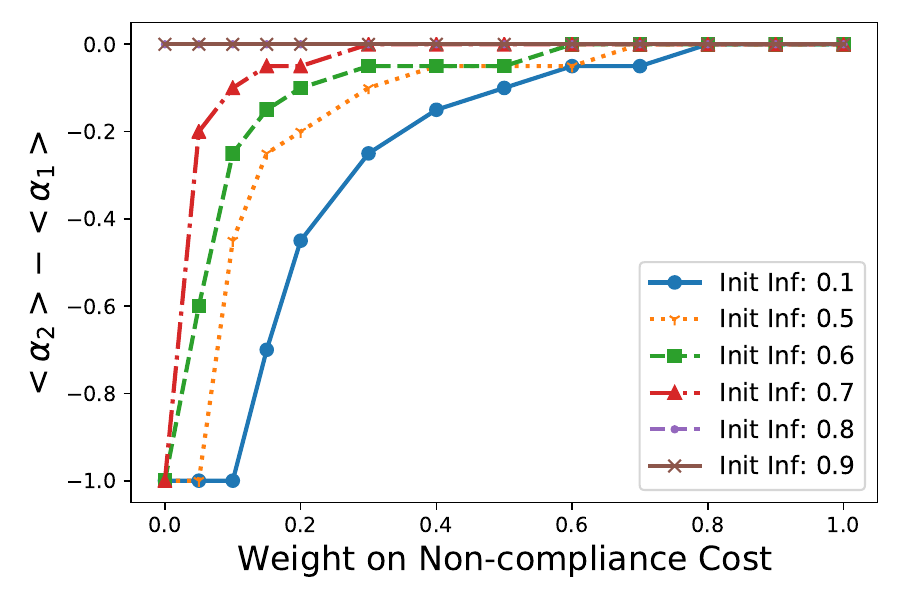}
		\caption{$\kappa_g=0.7$, State 1:  $0.9$ \label{fig:asym_kg_p7_3}}
	\end{subfigure}\\
	\begin{subfigure}[t]{0.33\textwidth}
 		\includegraphics[scale=0.27]{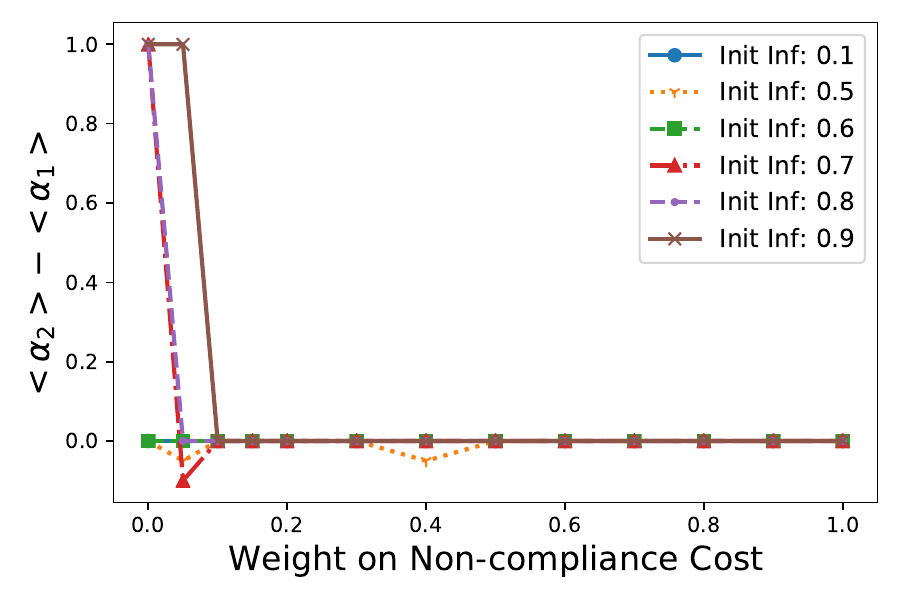}
		\caption{$\kappa_g=0.9$, State 1: $0.1$ \label{fig:asym_kg_p9_1}}
	\end{subfigure}~
	\begin{subfigure}[t]{0.33\textwidth}
 		\includegraphics[scale=0.27]{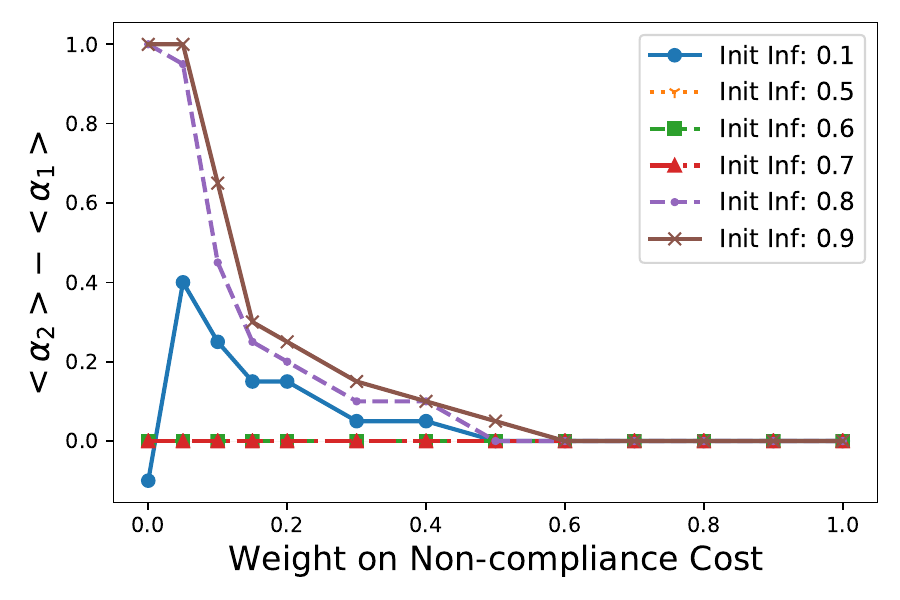}
		\caption{$\kappa_g=0.9$, State 1: $0.5$ \label{fig:asym_kg_p9_2}}
	\end{subfigure}~
	\begin{subfigure}[t]{0.33\textwidth}
 		\includegraphics[scale=0.27]{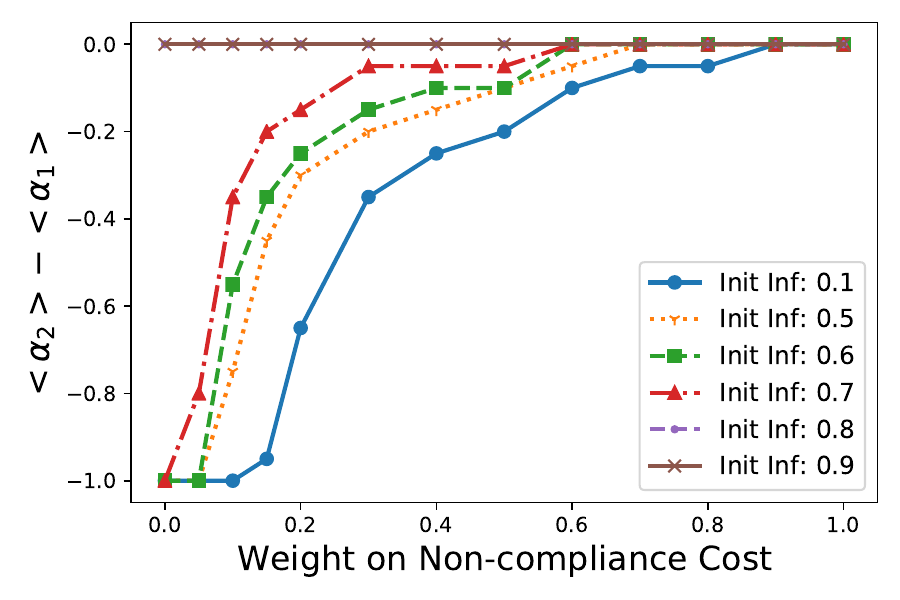}
		\caption{$\kappa_g=0.9$, State 1:  $0.9$ \label{fig:asym_kg_p9_3}}
	\end{subfigure}\\
	\begin{subfigure}[t]{0.33\textwidth}
 		\includegraphics[scale=0.27]{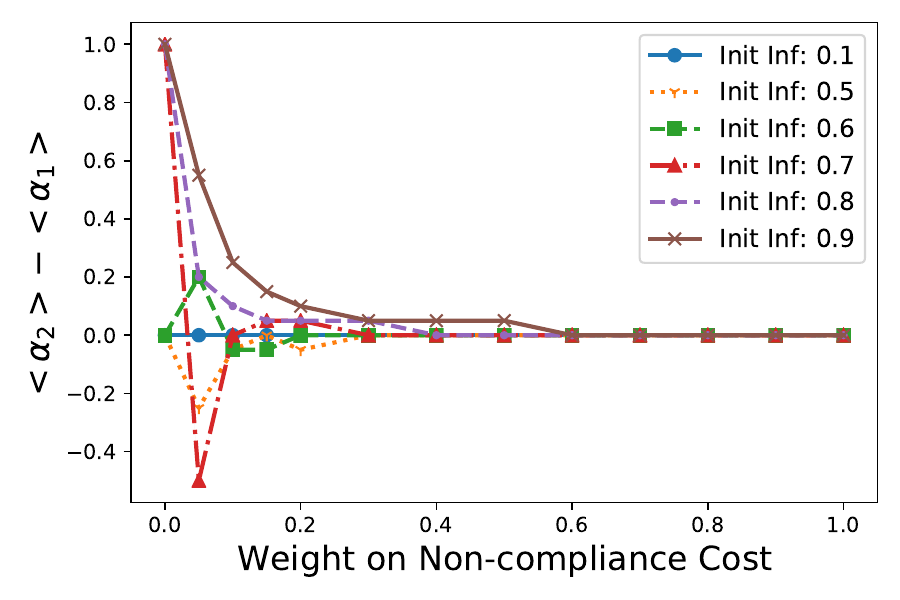}
		\caption{$\kappa_g=0.99$, State 1: $0.1$ \label{fig:asym_kg_p99_1}}
	\end{subfigure}~
	\begin{subfigure}[t]{0.33\textwidth}
 		\includegraphics[scale=0.27]{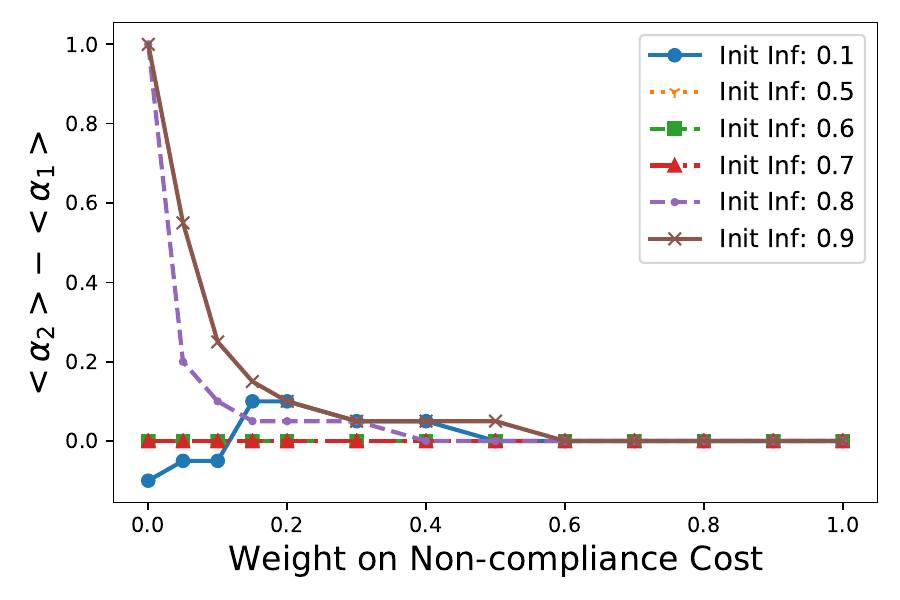}
		\caption{$\kappa_g=0.99$, State 1: $0.5$ \label{fig:asym_kg_p99_2}}
	\end{subfigure}~
	\begin{subfigure}[t]{0.33\textwidth}
 		\includegraphics[scale=0.27]{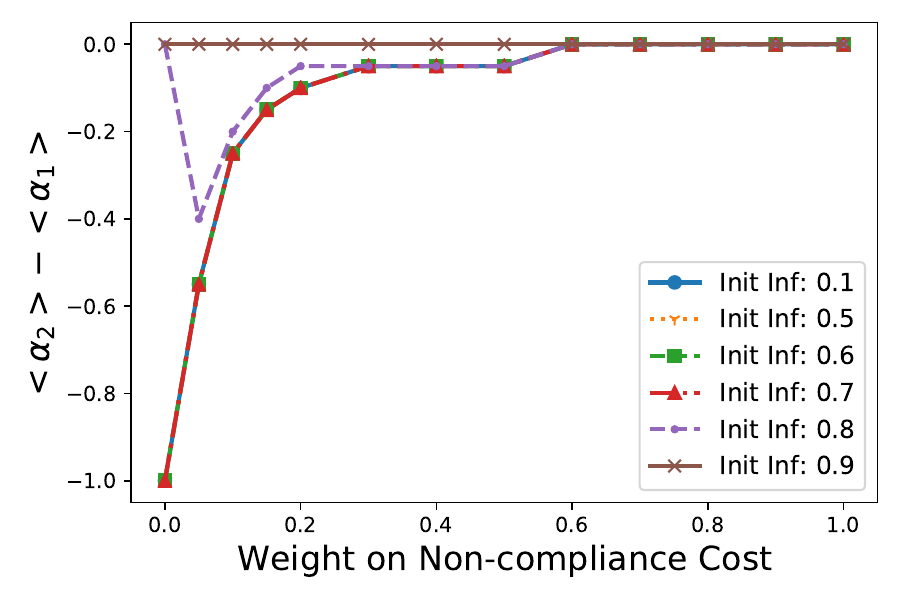}
		\caption{$\kappa_g=0.99$, State 1:  $0.9$ \label{fig:asym_kg_p99_3}}
	\end{subfigure}
	\caption{Free-riding (y-axis) as a function of non-compliance weight (x-axis) under transport asymmetry, Counties constrained to comply with respective States. Each row corresponds to a representative value of $\kappa_g$. The fraction after ``State 1'' is State 1's initial infection rate.}
	\label{fig:freeride_asym}
\end{figure}

The first column of panels (Figures~\ref{fig:asym_kg_p7_1},~\ref{fig:asym_kg_p9_1} corresponds to the lowest infection rate of State 1 in our experiments (i.e., an initial infection rate of 0.1). They show that there is no free-riding unless State $2$'s initial infection rate is significantly higher (at least $0.7$) \textit{and} the shared non-compliance weight is very low (roughly speaking, it has to be less than $0.2$ for a significant policy-difference to be observed). Even for the latter configurations, free-riding seems less and less prevalent as we decrease $\kappa_g$. For $\kappa_g=0.7$ (or smaller, not shown due to qualitative similarity), there is no free-riding except when Counties are fully autonomous (i.e., $\gamma=0$) and \II is at least $0.7$. Moreover, except perhaps for the outliers in Figure~\ref{fig:asym_kg_p99_1} for $\kappa_g=0.99$, it is State $2$ that free-rides off State $1$, if at all. This is understandable since State $1$ starts with a large uninfected population but has a higher exposure to infection because of including all the favorite destinations; hence, it makes sense for State $1$ to employ a stronger policy to prevent high infection costs, and State $2$ can take advantage of this. However, it is still interesting that even a small amount of centralization (i.e., a positive non-compliance weight cost) eliminates free-riding effectively. Finally, when $\kappa_g=0.99$ and State $1$ has the lowest initial infection rate of $0.1$, there seems to be mild evidence of free-riding by State $1$  off State $2$ for some values of \II (e.g., $0.7$).

The last column (Figures~\ref{fig:asym_kg_p7_3},~\ref{fig:asym_kg_p9_3}, and~\ref{fig:asym_kg_p99_3}) corresponds to the highest infection rate of State 1 in our experiments (i.e., an initial infection rate of 0.9). They stand in stark contrast to the previous setting. For these parameter configurations, it is State $1$ whose equilibrium policy is much weaker than that of State $2$. This is because the initial infection rate of State $2$ is lower than or equal to that of State $1$ --- if State $1$ takes no anti-infection measures, State $2$ is forced to adopt a strong policy to prevent new infections, despite its Counties being less favored destinations (in terms of the transport matrix). What is still interesting is that the magnitude of the policy-difference remains significantly higher than corresponding magnitudes in the first column (low initial infection rate in State $1$) as we increase the non-compliance cost weight. However, it is unclear whether this higher magnitude should be viewed as evidence of free-riding by State $1$ or State $1$ just giving up due to unfavorable initial conditions.

The middle column (Figures~\ref{fig:asym_kg_p7_2},~\ref{fig:asym_kg_p9_2}, and~\ref{fig:asym_kg_p99_2}) corresponds to an infection rate for State 1 that is in between the highest and lowest values we considered in the previous two settings (i.e., an initial infection rate of 0.5). In this case, State $2$'s initial infection rate may be higher than, equal to, or lower than that of State $1$. Despite this, if we find evidence of free-riding at all, it is almost always State $2$ that free-rides off State $1$. Moreover, the drop-off in the value of the policy-difference is slower in general than that in the first column, suggesting (slightly) higher evidence of free-riding of State $2$ off State $1$ even for higher values of the non-compliance cost weight (i.e., more and more centralized, hierarchical control). It thus appears that for a moderate initial infection rate in the globally favorite destinations, the transport asymmetry is a crucial factor in determining who free-rides and to what extent. Finally, there is evidence of free-riding in the opposite direction, i.e., by State $1$  off State $2$ (the magnitude of the policy-difference is $\approx 0.7$) when \II $=0.1$, $\gamma=0.05$, and $\kappa_g = 0.7$ (also when $\kappa_g \in \{0.5, 0.6\}$, not shown). This suggests a non-trivial region of low centralization where the possibility of a high infection cost (initial population is largely uninfected) drives State $2$ towards a strict policy that allows State $1$ to free-ride despite the latter including all favorite destinations. 
 
More generally, when the Government cares almost entirely about the  infection cost ($\kappa_g = 0.99$), the evidence of free-riding is drastically reduced in general, as long as there is some willingness to comply.

\subsubsection{Fairness}\label{sec:exp-fairness}
\begin{figure}[ht!]
    \centering
	\begin{subfigure}[t]{0.49\textwidth}
	\centering
		\includegraphics[scale=0.4]{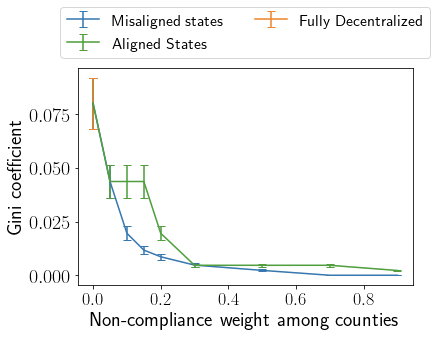}
		\caption{Symmetric transport matrix. \label{fig:synth_uniform}}
	\end{subfigure}~
	\begin{subfigure}[t]{0.49\textwidth}
	\centering
		\includegraphics[scale=0.4]{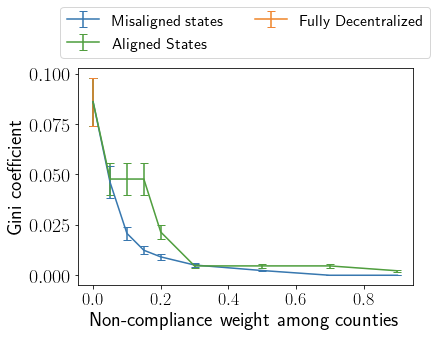}
		\caption{Transport asymmetry with one favorite County per State\label{fig:synth_favcounty}}
	\end{subfigure}
		\caption{Gini coefficient averaged over 
	30 trials for \textit{Aligned States}, 50 for each other scenario. Error bars show one standard error.}\label{fig:gini}
\end{figure}

We also study how \textit{fair} the distribution of costs is among the Counties at an HECG equilibrium for different degrees of centralization and different priorities of the States. Of the many fairness concepts that exist in the literature, we apply the popular measure, the \textit{Gini coefficient} \cite{gini1912variabilita}, to Counties' overall costs at profile $\actvec$ returned by Algorithm~\ref{alg:ne}:
\[ \mathtt{Gini}(\actvec) = \frac{\sum_{a \in \C}\sum_{a' \in \C}\lvert \Co_a(\actvec)-\Co_{a'}(\actvec) \rvert}{2n_L\sum_{a \in \C}\Co_a(\actvec)}.\]

For all these experiments, we set $\kappa_{g}=0.5$ for the Government, and $\gamma_a=0.5$ for both States. 
There are again two subsets with respect to the transport matrix $R$. The first subset is under transport symmetry (as in all the experiments in Section~\ref{sec:exp-soc-wel} and the first subset in Section~\ref{sec:exp-free-riding}), and the second under a specific transport asymmetry based on the idea of globally favorite destinations (as expounded in Section~\ref{sec:exp-free-riding}); however, for the second subset of reported experiments on fairness, we make exactly one arbitrarily chosen County in each State a globally favorite destination ($c_1$ and $c_6$), with $r^H = 0.35$, hence $\lvert \C_f \rvert = 2$ and $r^L=0.0375$.

For each of the above subsets, all Counties $a \in \C$ share their non-compliance weight $\gamma_a = \gamma$, which we vary from $0$ to $0.9$ in steps of $0.01$, and set $\kappa_a = \kappa'_a (1-\gamma_a)$ where $\kappa'_a$ are drawn i.i.d. from $\mathbb{U}[0,1]$. For each combination of weight vectors of Counties, we have three scenarios based on County compliance and States' relative weighting of infection and implementation costs:
\begin{itemize}
    \item \textit{Fully decentralized} scenario: each County's non-compliance weight is set to $0$; this corresponds to a simultaneous-move game among Counties only, and the weights of both States (and the Government) are immaterial.
    \item \textit{Misaligned States}: For State $1$, we set $\kappa_1=0.25\eta_1=0.2(1-\gamma_1)$; for State $2$, we set $\kappa_1=4\eta_1=0.8(1-\gamma_1)$.
    \item \textit{Aligned States}: we set $\kappa_a=\eta_a=0.5(1-\gamma_a)$ for each State $a\in \{1,2\}$.
\end{itemize}
Thus, unlike in Sections~\ref{sec:exp-soc-wel} and~\ref{sec:exp-free-riding}, weight vectors of Counties are not identical to those of the respective States, in general.

Each set of draws of $\kappa'_a$ for all Counties constitutes one trial.
Figure~\ref{fig:gini} reports the resulting average Gini coefficient values over all the trials (lower is better). $\mathtt{Gini}(\actvec)$ is consistently low for all experimental settings we tried, even with full decentralization (never exceeding $0.1$); however, there is a significant drop when the structure becomes hierarchical with Counties willing to comply with respective States even to a small extent (see left half of each panel of Figure~\ref{fig:gini}), regardless of the priority misalignment between these States. Thus, we have identified a (sub-)space of parameter configurations for which a 3-level hierarchy improves overall fairness among actors \textit{implementing} the policy compared to full autonomy given to them. Another notable observation is that the Gini coefficient characteristics are essentially unaffected by the variation in the transport matrix we considered in our experiments (Figures~\ref{fig:synth_uniform} and~\ref{fig:synth_favcounty}). 

\subsection{Experiments on Real Data}\label{sec:ny-nj}
For these experiments, we considered recent real data on two states in the U.S., New York (NY) and New Jersey (NJ) that have 62 and 21 counties respectively, and imagined them to form a closed system under a single Government.  We summarized data on these two States from the U.S. Department of Transportation and Census Bureau to estimate population shares and the transport matrix for this world (as we show below), and varied cost function weights  and initial infection rates in our experiments.

BRD is slow and scales poorly to this large game (83 Counties), so we approximated equilibria in this world using the QIP approach (Section~\ref{sec:qip-methods} Algorithms~\ref{alg:taylor_iter}~and~\ref{alg:qip-brd}) that trades off computation speed against accuracy in equilibrium computation. We used the following parameter settings for Algorithm~\ref{alg:qip-brd}: $\Delta_g = 0.1, T_2 = 2, k_2 = 2, e_2 = 10^{-6}, \actvec_{\C_0} = (0.5,0.5,\dots,0.5)$. However, in some instances, the QIP approach failed to find a solution; for those  instances we use BRD with a discretization factor of $0.1$ instead.

We will start by describing the real-world datasets that we used to construct the game environment for our experiments. We collected three sets of publicly available data on NY and NJ:
\begin{enumerate}[leftmargin=*,label=(\Alph*)]
    \item \label{roads} Two data sets on U.S. roads and road traffic published on September 30, 2020 by U.S. Department of Transportation, Federal Highway Administration (FHWA):
    \begin{enumerate}[leftmargin=*,label=(\arabic*)]
        \item \label{roadlength} \textbf{HM-20M 2019}\footnote{\url{https://www-fhwa-dot-gov.proxy.lib.umich.edu/policyinformation/statistics/2019/hm20m.cfm}}: This set tabulates Public Road Length (as of 2019) in kilometers for every U.S. state (rows), roads being categorized into RURAL and URBAN, and further sub-categorized into INTERSTATE, OTHER FREEWAYS AND EXPRESSWAYS, OTHER PRINCIPAL ARTERIAL, MINOR ARTERIAL, MAJOR COLLECTOR, MINOR COLLECTOR, and LOCAL (columns). These categories are called \textit{functional systems}. We  only use data from the rows corresponding to the states of NY and NJ.
        \item \label{traffic} \textbf{VM-2 2019}\footnote{\url{https://www-fhwa-dot-gov.proxy.lib.umich.edu/policyinformation/statistics/2019/vm2.cfm}}: This set tabulates a quantification of Functional System Travel (as of 2019) in Annual Vehicle-Miles for each state and for each of the same categories and sub-categories of roads as \textbf{HM-20M 2019}. Again, we only use data from the rows corresponding to NY and NJ.
    \end{enumerate}
    \item \label{countypops} \textbf{County Population Totals} (2010-2019)\footnote{\url{https://www-census-gov.proxy.lib.umich.edu/data/datasets/time-series/demo/popest/2010s-counties-total.html}}: A dataset released in March, 2020 by the Population division of U.S. Census Bureau. For each state, a table records annual estimates of the resident population for counties in each state as of July 1 every year (rows) over the years 2010-2019 (columns), based on April 1 2010 U.S. census data. We use data for all 62 counties of NY\footnote{Annual Estimates of the Resident Population for Counties in New York: April 1, 2010 to July 1, 2019 (CO-EST2019-ANNRES-36); Source: U.S. Census Bureau, Population Division; Release Date: March 2020} and all 21 counties NJ\footnote{Annual Estimates of the Resident Population for Counties in New Jersey: April 1, 2010 to July 1, 2019 (CO-EST2019-ANNRES-34); Source: U.S. Census Bureau, Population Division; Release Date: March 2020} for the year 2019 only.
\end{enumerate}

We ran our experiments on an HECG with one federal government $g$, two States NY ($s_1$) and NJ ($s_2$) only, with 62 and 21 arbitrarily numbered Counties $\{c_1,c_2,\dots,c_{62}\}$ and $\{c_{63},c_{64},\dots,c_{83}\}$ under them respectively (see Figure~\ref{fig:model}).
For each County $c$, the share $\mu_c$ as defined in  Section~\ref{sec:game_model} is naturally given by the ratio of the population of County $c$ to the total population of all 83 Counties in the two States, all directly available from the data set~\ref{countypops}. Hence, we can obtain the shares of the States as  $\mu_{s_1}=\sum_{i=1}^{62} \mu_{c_i}$, and $\mu_{s_2}=\sum_{i=63}^{83} \mu_{c_i}.$ 
We also use all three datasets reported above to estimate the transport matrix $R = \{r_{cc'}\}_{c,c' \in \C}$ defined in Section~\ref{sec:infec}, as explained in the rest of this section. No other component of the model is estimated from these datasets.

First, we compute the \textit{traffic} on each road category in datasets~\ref{roads} as the ratio of vehicle-miles (from \textbf{VM-2 2019}) to road length (from \textbf{HM-20M 2019}). Then, for each State $s \in \{s_1,s_2\}$, we compute 3 intermediate quantities to use in transport matrix entries:
\begin{itemize}[leftmargin=*]
    \item $P_{\text{in-county}}(s)$, the ratio of the total traffic on the MAJOR COLLECTOR, MINOR COLLECTOR, and LOCAL road categories in State $s$ to the total traffic on all road categories in State $s$;
    \item $P_{\text{in-state}}(s)$, the ratio of the total traffic on the OTHER FREEWAYS AND EXPRESSWAYS, OTHER PRINCIPAL ARTERIAL, and MINOR ARTERIAL road categories in State $s$ to the total traffic on all road categories in State $s$;
    \item $P_{\text{between-states}}(s)$,  the ratio of the total traffic on the INTERSTATE road category in State $s$ to the total traffic on all road categories in State $s$.
\end{itemize}
Finally, we have
\begin{align*}
    r_{cc'} = \begin{cases}
    P_{\text{in-county}}(s) &\text{if $c = c'$},\\
    \frac{\mu_{c'}}{\mu_s} \cdot P_{\text{in-state}}(s) &\text{if $c\neq c'$, $c \in \chi(s)$, $c' \in \chi(s)$},\\
    \frac{\mu_{c'}}{1-\mu_s} \cdot P_{\text{between-states}}(s) &\text{if $c \in \chi(s)$, $c' \not\in \chi(s)$},
    \end{cases}
\end{align*}
where $\chi(s)$ denote the Counties in State $s$. We now report the results of the three sets of experiments that we performed in this environment, analogous to those in Section~\ref{sec:expt_synth}.

\subsubsection{Social welfare}\label{sec:NYNJ-exp-soc-wel}
As in Section~\ref{sec:exp-soc-wel}, we measure the social cost as the Government's overall cost $\Co_{g}(\actvec) = \kappa_{g} \Cinc_{g}(\actvec) + (1-\kappa_{g})\Cdec_{g}(\actvec)$, where we vary $\kappa_g$ in $(0,1)$. Here, we examine three of the four centralization scenarios introduced in Section~\ref{sec:exp-soc-wel}: \CCS, \CU, and \EQthree. For \EQthree, all States and Counties share the non-compliance cost weight $\gamma$, which we vary from $0$ to $1$ in steps of $0.1$ and set $\kappa_a = 0.9(1-\gamma)$ for every $a \in \S \cup \C$. We fix the initial infection rate of every County in NJ at $0.1$, and vary that of every County in NY (all equal) from $0.1$ to $0.9$ in steps of $0.1$, the latter being referred to as \II. 
For \CCS and \CU, we used projected stochastic gradient descent (PGD) for social cost optimization with the learning rate set at $0.2$ and number of iterations at $10^4$, as in Section~\ref{sec:exp-soc-wel}.
\begin{figure}[ht!]
    \centering
	\begin{subfigure}[t]{0.45\textwidth}
		\includegraphics[scale=0.31]{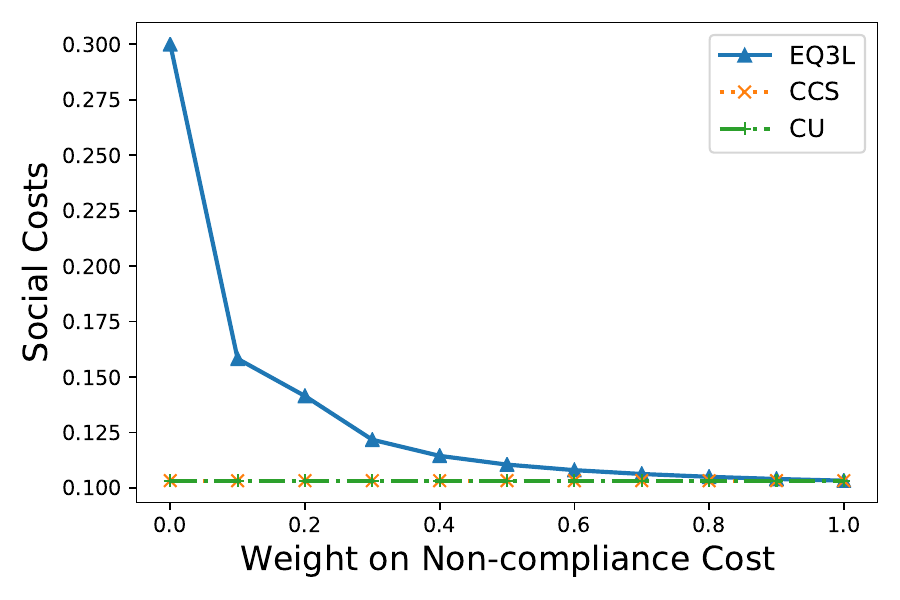}
		\caption{\II $=0.7$, $\kappa_g=0.7$ \label{fig:NYNJ_soc_cost_gamma1}}
	\end{subfigure}~
	\begin{subfigure}[t]{0.45\textwidth}
		\includegraphics[scale=0.31]{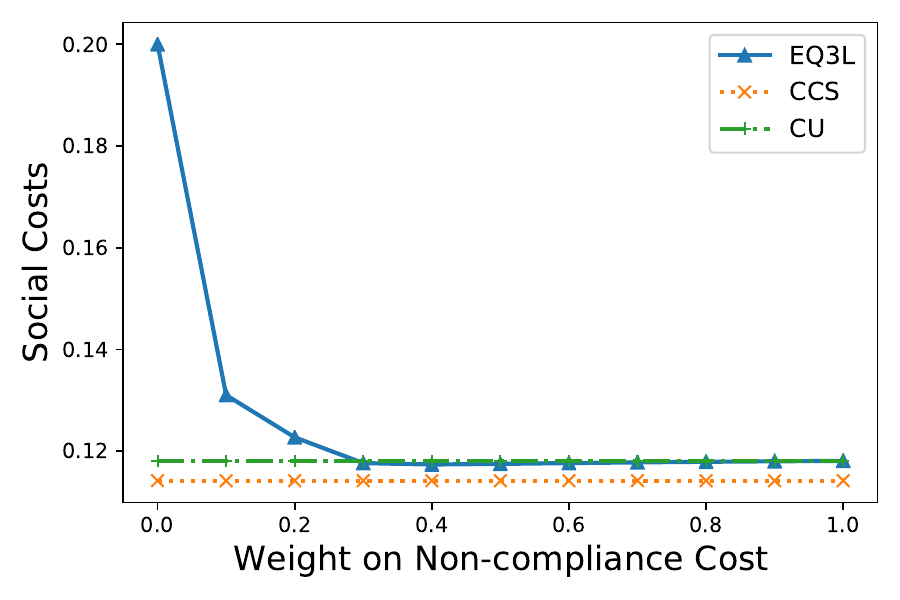}
		\caption{\II $=0.7$, $\kappa_g=0.8$ \label{fig:NYNJ_soc_cost_gamma2}}
	\end{subfigure}\\
	\begin{subfigure}[t]{0.45\textwidth}
		\includegraphics[scale=0.31]{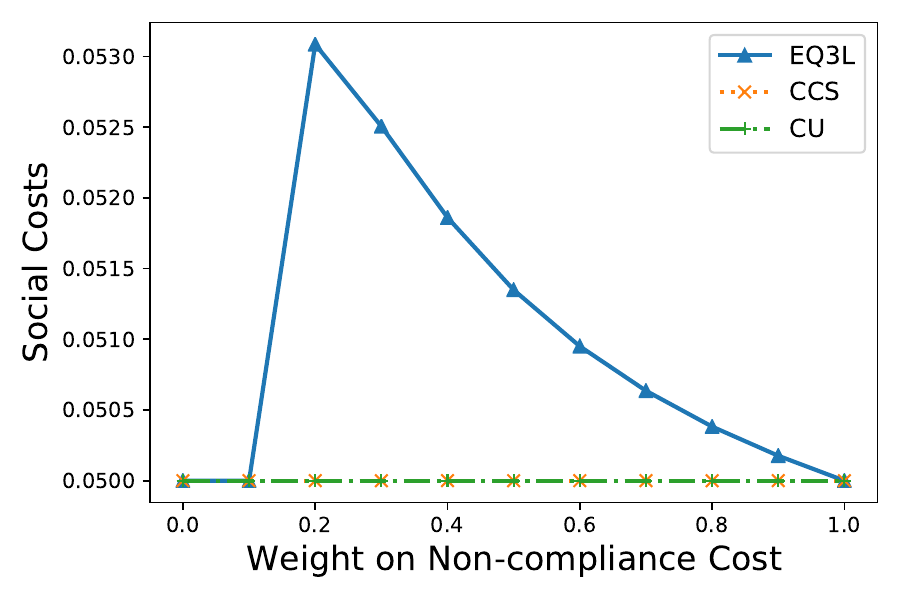}
		\caption{\II $=0.7$, $\kappa_g=0.95$ \label{fig:NYNJ_soc_cost_gamma3}}
	\end{subfigure}~
	\begin{subfigure}[t]{0.45\textwidth}
		\includegraphics[scale=0.31]{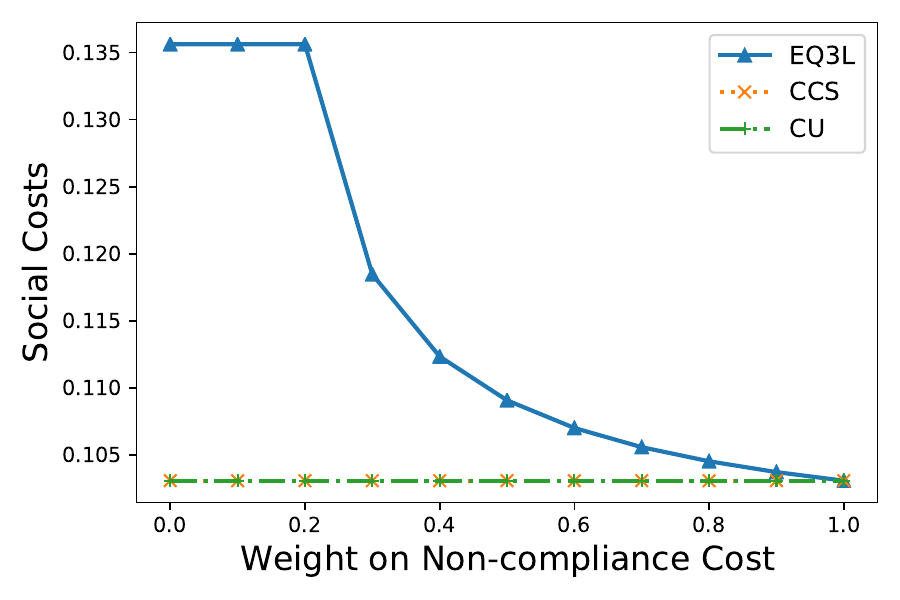}
		\caption{\II $=0.8$, $\kappa_g=0.7$ \label{fig:NYNJ_soc_cost_gamma4}}
	\end{subfigure}\\
	\begin{subfigure}[t]{0.45\textwidth}
		\includegraphics[scale=0.31]{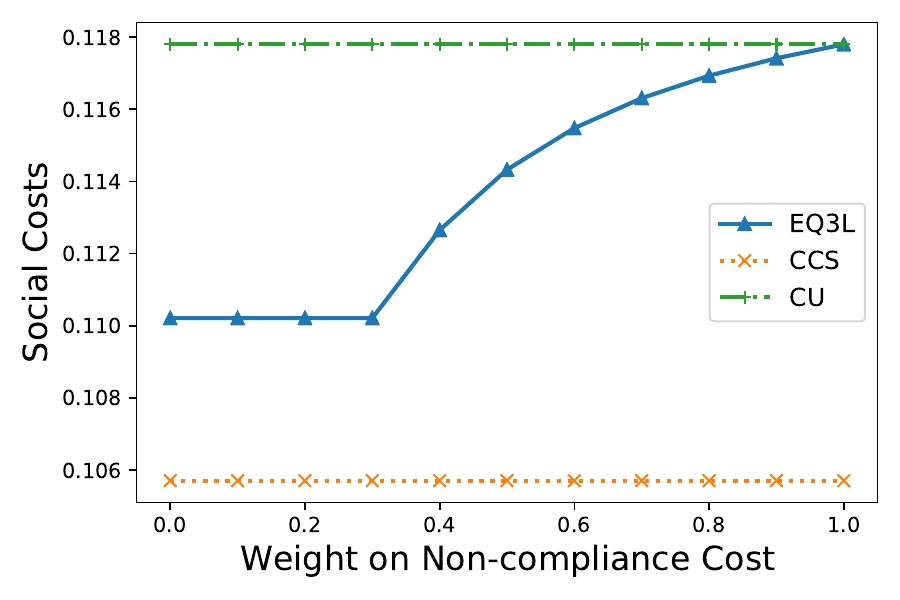}
		\caption{\II $=0.8$, $\kappa_g=0.8$ \label{fig:NYNJ_soc_cost_gamma5}}
	\end{subfigure}~
	\begin{subfigure}[t]{0.45\textwidth}
		\includegraphics[scale=0.31]{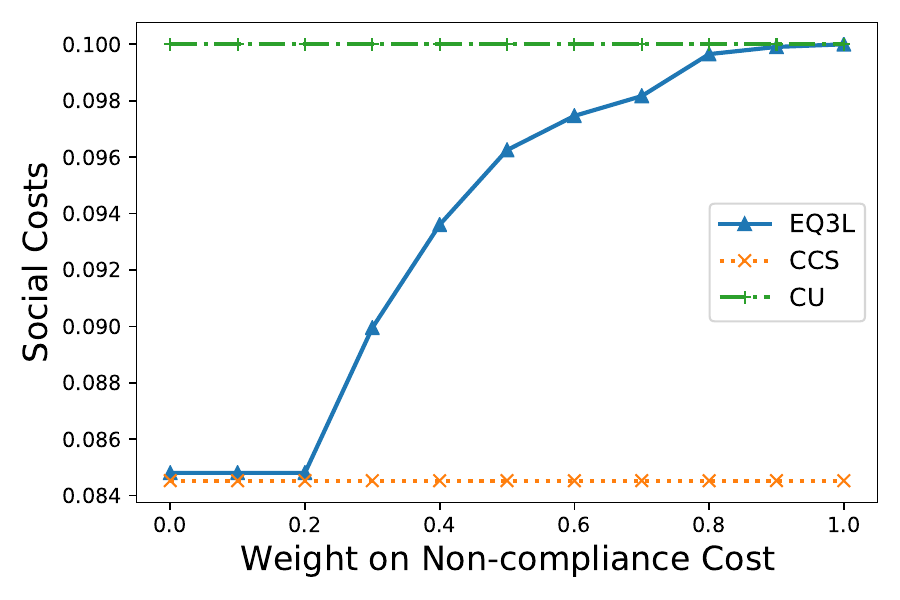}
		\caption{\II $=0.8$, $\kappa_g=0.9$ \label{fig:NYNJ_soc_cost_gamma6}}
	\end{subfigure}\\
	\begin{subfigure}[t]{0.45\textwidth}
		\includegraphics[scale=0.31]{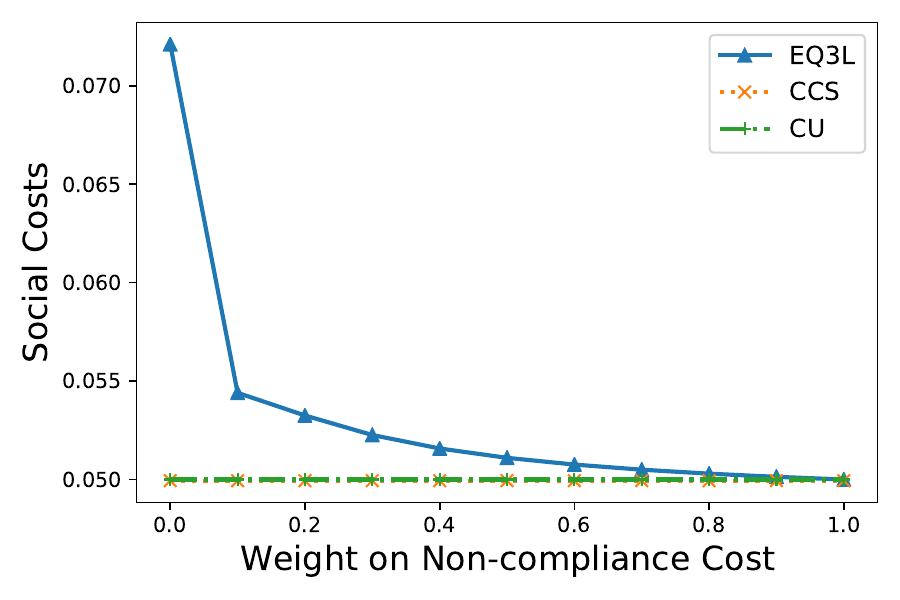}
		\caption{\II $=0.8$, $\kappa_g=0.95$ \label{fig:NYNJ_soc_cost_gamma7}}
	\end{subfigure}
	\caption{Social cost (y-axis) for each of our centralization scenarios as a function of the shared non-compliance cost weight (x-axis), with \II and $\kappa_g$ as parameters, from experiments based on NY and NJ data. Error bars show one standard error (barely visible)}% (two pairs of parameter-values shown)
	\label{fig:NYNJ_soc_cost_gamma}
\end{figure}

The broad characteristics are similar to those of the corresponding experimental results for synthetic data (Section~\ref{sec:exp-soc-wel} Figure~\ref{fig:soc_cost_gamma}). For values of $\kappa_g$ outside a small sub-interval of large values, \EQthree social costs are at least as high as those under the centralized scenarios for all values of \II considered, indicating that there is no advantage to decentralization in terms of social cost; however, within the said interval, there are values of \II (initial infection rate of NJ) for which the decentralized social cost is (weakly) lower than that of the uniform centralized solution (Figures~\ref{fig:NYNJ_soc_cost_gamma5} and~\ref{fig:NYNJ_soc_cost_gamma6}).

\begin{figure}[ht!]
\centering
	\begin{subfigure}[t]{0.45\textwidth}
 		\includegraphics[scale=0.35]{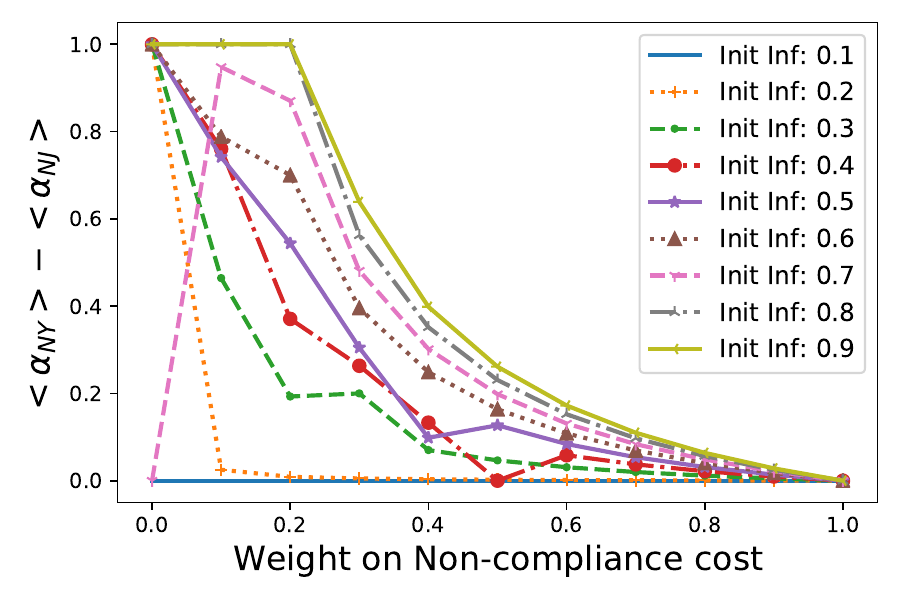}
		\caption{$\kappa_g=0$ \label{fig:fr_nynj_0}}
	\end{subfigure}~
	\begin{subfigure}[t]{0.45\textwidth}
 		\includegraphics[scale=0.35]{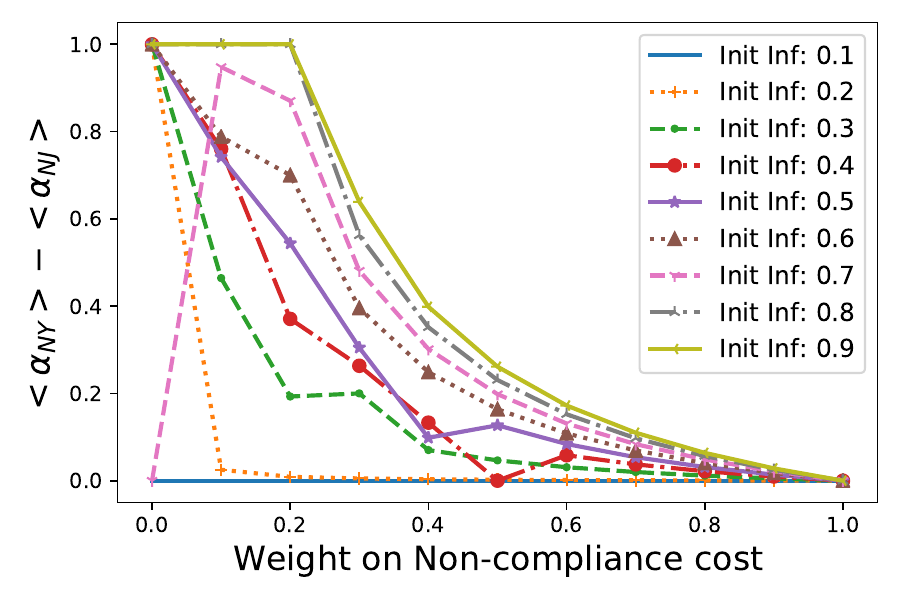}
		\caption{$\kappa_g=0.2$ \label{fig:fr_nynj_p2}}
	\end{subfigure}\\
	\begin{subfigure}[t]{0.45\textwidth}
 		\includegraphics[scale=0.35]{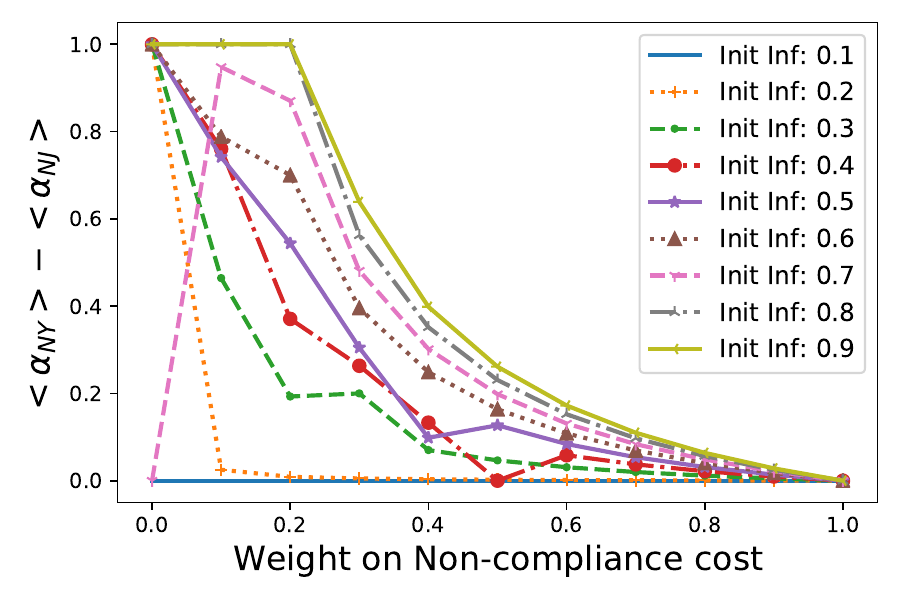}
		\caption{$\kappa_g=0.5$ \label{fig:fr_nynj_p5}}
	\end{subfigure}~
	\begin{subfigure}[t]{0.45\textwidth}
 		\includegraphics[scale=0.35]{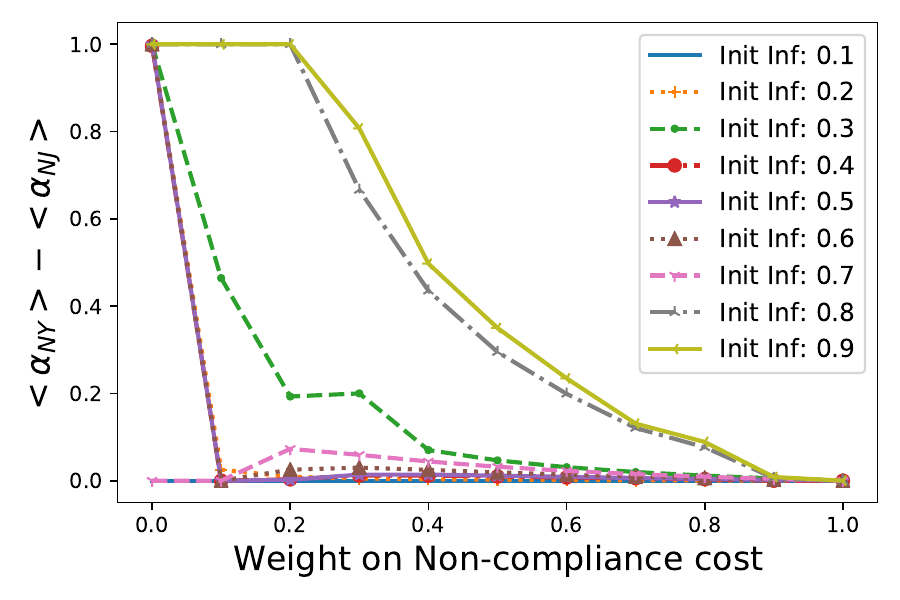}
		\caption{$\kappa_g=0.9$ \label{fig:fr_nynj_p9}}
	\end{subfigure}\\
	\begin{subfigure}[t]{0.45\textwidth}
 		\includegraphics[scale=0.35]{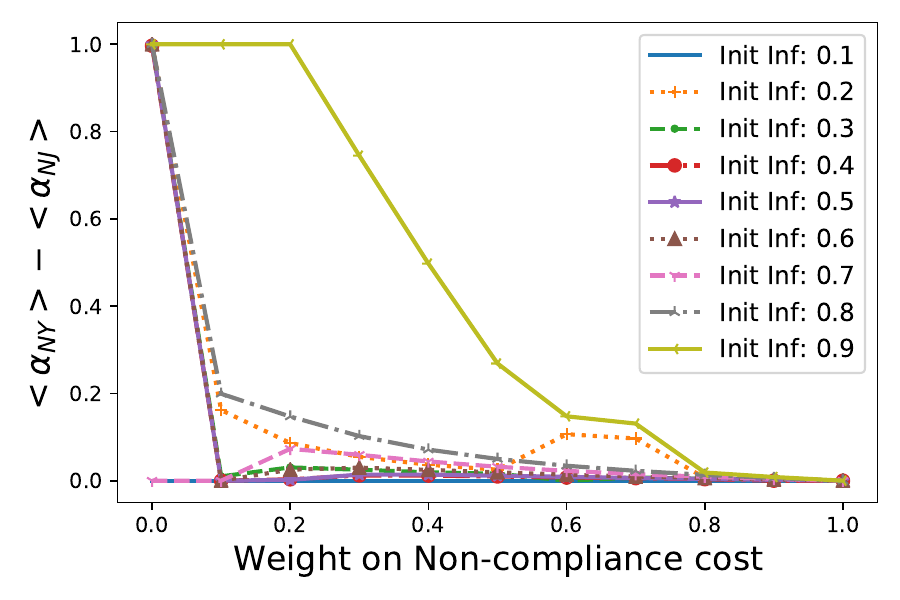}
		\caption{$\kappa_g=0.95$ \label{fig:fr_nynj_p95}}
	\end{subfigure}~
	\begin{subfigure}[t]{0.45\textwidth}
 		\includegraphics[scale=0.35]{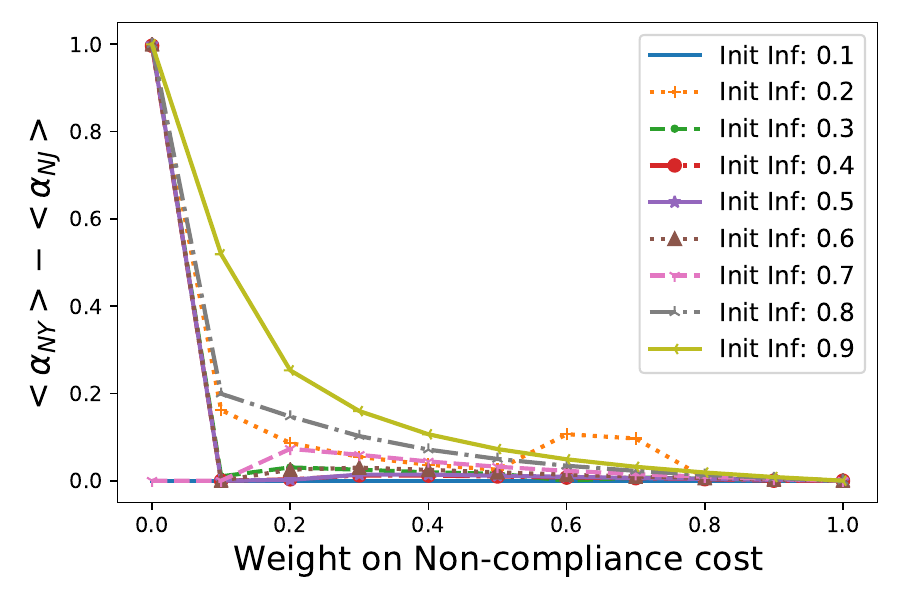}
		\caption{$\kappa_g=1$ \label{fig:fr_nynj_1}}
	\end{subfigure}
	\caption{Free-riding (y-axis) by NY off NJ in our model, as a function of non-compliance weight (x-axis). Results are averaged over $10$ BRD iterations for each configuration; error bar shows one standard error (barely visible).}
	\label{fig:freeride_nynj}
\end{figure}

\subsubsection{Free-riding}\label{sec:NYNJ-free-riding}
We use the same proxy for free-riding as in Section~\ref{sec:exp-soc-wel}. We focus on the \EQthree scenario only, and use the same combinations of weight vectors and initial infection rates as in Section~\ref{sec:exp-soc-wel}.
Figure~\ref{fig:freeride_nynj} shows the difference in policies averaged over respective Counties, $\langle\alpha_{NY} \rangle-\langle\alpha_{NJ} \rangle$, for representative values of $\kappa_g$. For all values of $\kappa_g \in [0,1]$ that we tried, we found evidence of free-riding by NY, the higher-infection State in our model, for lower values of the universally shared non-compliance cost weight; as expected, the evidence grows weaker, in general, with higher centralization and smaller \II (NY).

\subsubsection{Fairness.}\label{sec:NYNJ-fairness}
As in Section~\ref{sec:exp-fairness}, we use the Gini coefficient of the overall costs of all $83$ Counties under NY and NJ as our measure of (un)fairness. Figure~\ref{fig:gini_NYNJ} shows our results for the initial infection rate at each County set at $0.2$ and the same three scenarios and weight configurations (including $\kappa_{g}=0.5$ for the Government) as in Section~\ref{sec:exp-fairness}. The characteristics are virtually identical to those in Figures~\ref{fig:synth_uniform} and~\ref{fig:synth_favcounty}: unfairness is already low even under full decentralization, and drops quickly with more centralization regardless of State-level priorities.  
\begin{figure}[ht!]
    \centering
	\centering
		\includegraphics[scale=0.4]{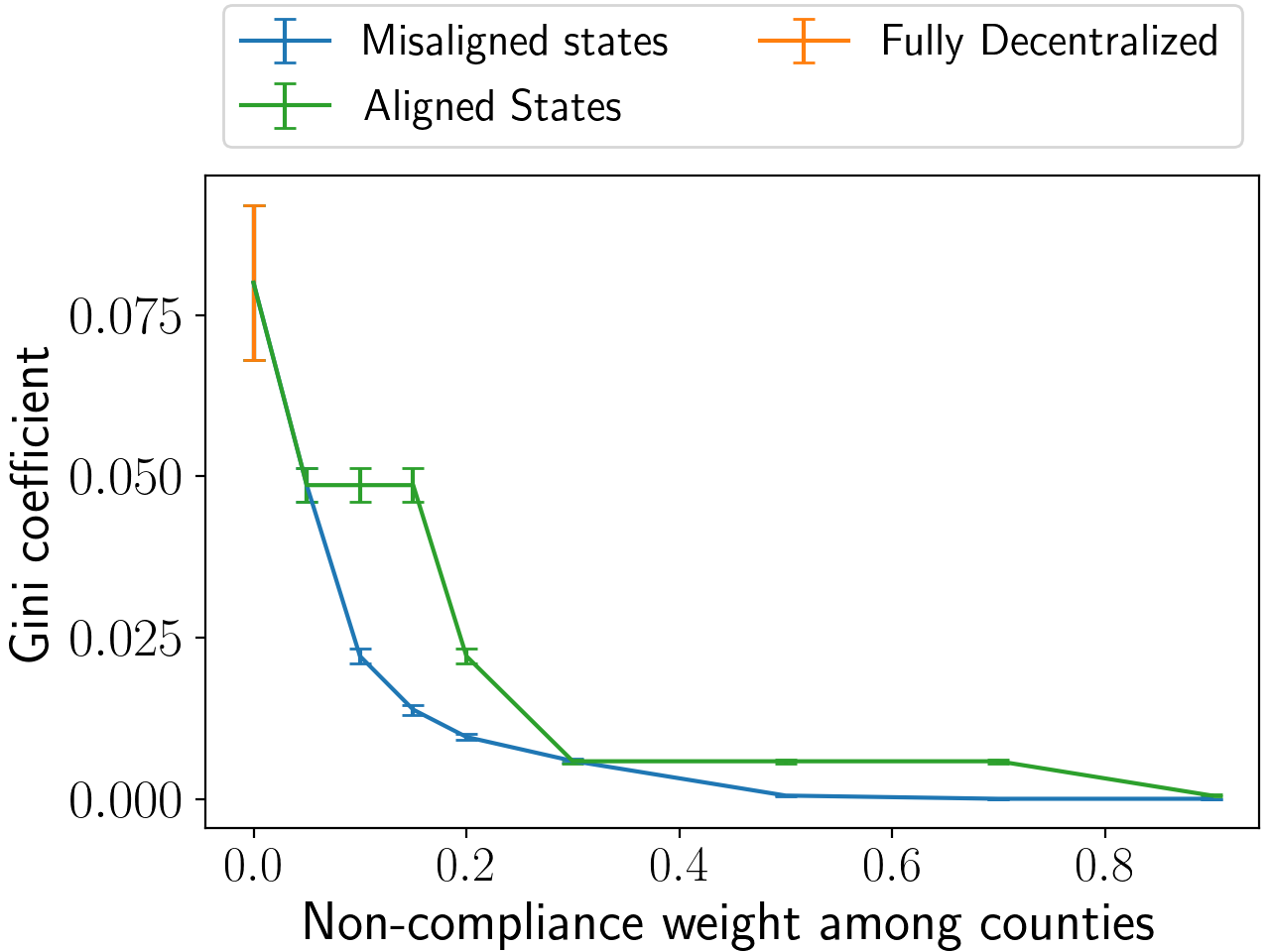}
		\caption{Gini coefficient averaged over 
	30 trials for \textit{Aligned States}, 50 for each other scenario. Error bars show one standard error.}\label{fig:gini_NYNJ}
\end{figure}

\section{Future Work}\label{sec:disc}
Further research directions include more extensive experimentation under various parameter configurations (e.g. testing causal hypotheses); using the actual ABM \cite{wilder2020modeling} instead of our approximations and handling the resulting computational issues; considering more complex policies (e.g. adaptive strategies or multi-dimensional action spaces) and invoking more sophisticated empirical game-theoretic analysis; applying HECG or its natural variants to other problems of hierarchical decision-making such as within a corporation or political organization, or in the interplay of domestic and international politics \cite{putnam1988diplomacy}.

%%% The acknowledgments section is defined using the "acks" environment
%%% (rather than an unnumbered section). The use of this environment 
%%% ensures the proper identification of the section in the article 
%%% metadata as well as the consistent spelling of the heading.

%\begin{acks}
%%%%
%\end{acks}

%%%%%%%%%%%%%%%%%%%%%%%%%%%%%%%%%%%%%%%%%%%%%%%%%%%%%%%%%%%%%%%%%%%%%%%%

%%% The next two lines define, first, the bibliography style to be 
%%% applied, and, second, the bibliography file to be used.

\bibliographystyle{ACM-Reference-Format} 
\bibliography{abb,covid}

%%%%%%%%%%%%%%%%%%%%%%%%%%%%%%%%%%%%%%%%%%%%%%%%%%%%%%%%%%%%%%%%%%%%%%%%

\appendix
% NOTE: necessary when ptmx or no mathfont class option is given
\section{Omitted Details from Section~2.3}\label{app:closedform}
We will state and prove a useful property of Poisson distributions.
\begin{proposition}\label{prop:exppoisson}
Consider a random variable $Y \sim \Poi(\lambda)$. Then, for any non-zero real number $b$ independent of $Z$,
\[\mathbb{E}_Z [b^Z] = e^{-\lambda(1-b)}.\]
\end{proposition}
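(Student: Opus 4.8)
The plan is to recognize the quantity $\mathbb{E}[b^Y]$ as the probability generating function of the Poisson law evaluated at the point $b$, and to compute it directly from the series definition of the expectation. First I would write out the expectation by summing $b^k$ against the Poisson probability mass function: since $Y \sim \Poi(\lambda)$, we have $\Pr[Y = k] = e^{-\lambda}\lambda^k/k!$ for $k = 0, 1, 2, \dots$, so
\[
\mathbb{E}[b^Y] = \sum_{k=0}^{\infty} b^k \cdot \frac{e^{-\lambda}\lambda^k}{k!} = e^{-\lambda}\sum_{k=0}^{\infty} \frac{(\lambda b)^k}{k!}.
\]
Then I would invoke the Taylor series $\sum_{k=0}^{\infty} x^k/k! = e^{x}$ with $x = \lambda b$ to get $\mathbb{E}[b^Y] = e^{-\lambda} e^{\lambda b} = e^{-\lambda(1-b)}$, which is the claim.

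The only point that needs a word of care is the legitimacy of manipulating the infinite series — i.e., that $\mathbb{E}[b^Y]$ is well defined and equals the sum written above. This is immediate because the series converges absolutely for every real $b$: $\sum_{k=0}^{\infty} |b^k| \cdot e^{-\lambda}\lambda^k/k! = e^{-\lambda}\sum_{k=0}^{\infty}(\lambda |b|)^k/k! = e^{-\lambda} e^{\lambda |b|} < \infty$. Absolute convergence justifies both treating the sum as the expectation and factoring out the constant $e^{-\lambda}$, and the standard exponential series identity holds for all real (indeed complex) arguments, so no extra hypothesis on $b$ beyond being a fixed real is actually needed; the "non-zero" stipulation in the statement is harmless.

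I do not expect any genuine obstacle here: the result is the well-known generating-function identity for the Poisson distribution, and the proof is a two-line computation plus a one-line convergence remark. In the context of the paper, the intended application is $b = y_a(\actvec_L) = (1-p)^{\rho_a(\actvec)} \in (0,1)$ and $\lambda = C$, which yields $\mathbb{E}_X[y_a^X] = e^{-C(1-y_a)}$ and hence Equation~\eqref{closedform}; this special case is covered since such a $b$ is a fixed positive real.
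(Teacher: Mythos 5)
Your computation is correct and is exactly the argument the paper gives: expand $\mathbb{E}[b^Y]$ against the Poisson pmf, factor out $e^{-\lambda}$, and recognize the exponential series to obtain $e^{-\lambda}e^{\lambda b}=e^{-\lambda(1-b)}$. Your added remark on absolute convergence (and the observation that the "non-zero" hypothesis on $b$ is superfluous) is a harmless strengthening of the same proof, so there is nothing further to reconcile.
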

\begin{proof}
From definitions,
\begin{align*}
    \mathbb{E}_Z [b^Z] &= \sum_{z=0}^\infty a^z \Pr[Z=z] =   \sum_{z=0} a^z \cdot \frac{e^{-\lambda} \lambda^z}{z!} = e^{-\lambda} \sum_{z=0}^\infty  \frac{(b\lambda)^z}{z!} = e^{-\lambda} e^{b\lambda},
\end{align*}
which equals the desired expression.
\end{proof}

\section{Omitted Details from Section~3.2}
\label{sec:methods-qip-appx}
In this section, we provide 
additional experimental results comparing BRD and Taylor-Iter-BR. In Figure~\ref{fig:1_10}, we show the performance of Taylor-Iter-BR (Algorithm~\ref{alg:taylor_iter}) on an HECG having only one State with a non-compliance weight of $0$ and 10 Counties $\{c_1,c_2,\dots,c_{10}\}$ under it, i.e. effectively a 2-level game with 10 agents on the lower level. The State has a high infection rate $\kappa_s = 0.9$, and the Counties share the vector of weights $\kappa_c = \eta_c = \gamma_c = 1/3, \ \forall c\in \C$. 
Recall that an approximation to any function based on its Taylor series expansion is sensitive to the input point about which this expansion is taken, and the choice of this point is non-trivial.
For our Taylor series-based cost function approximation, we expand the infection function 
at a point $(\alpha_{\C_0},\alpha_{\C_0},\dots,\alpha_{\C_0})$ (the length of the vector being the number of counties $n$) where $\alpha_{\C_0} \in \{0.1,0.2,\dots,0.9\}$. 

We introduce asymmetry into the scenario by setting the initial infection rate in County $c_i$ at $0.8/(i+1)$ for $i=1,2,\dots,10$. As a standard for comparison, we computed the approximate equilibrium of this game with BRD$(0.01)$, i.e the BRD method (Section~\ref{sec:brd} Algorithm~\ref{alg:ne}) with discretization factor $0.01$; the chosen point for the Taylor series expansion that comes closest to the BRD solution corresponds to $\alpha_{\C_0}=\alpha^*_{\C_0}=0.5$.
 
 \begin{figure}[ht!]
	\centering
	\begin{subfigure}[t]{0.49\linewidth}
	\includegraphics[width=0.99\linewidth]{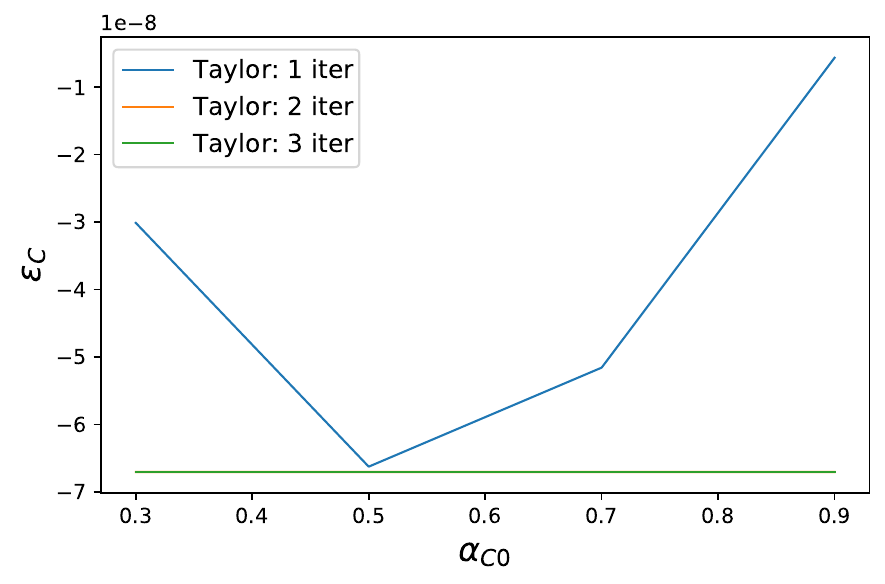}
	%\caption{Transport symmetry\label{fig:giniSym}}
	\end{subfigure}
    \begin{subfigure}[t]{0.49\linewidth}
	\includegraphics[width=0.99\linewidth]{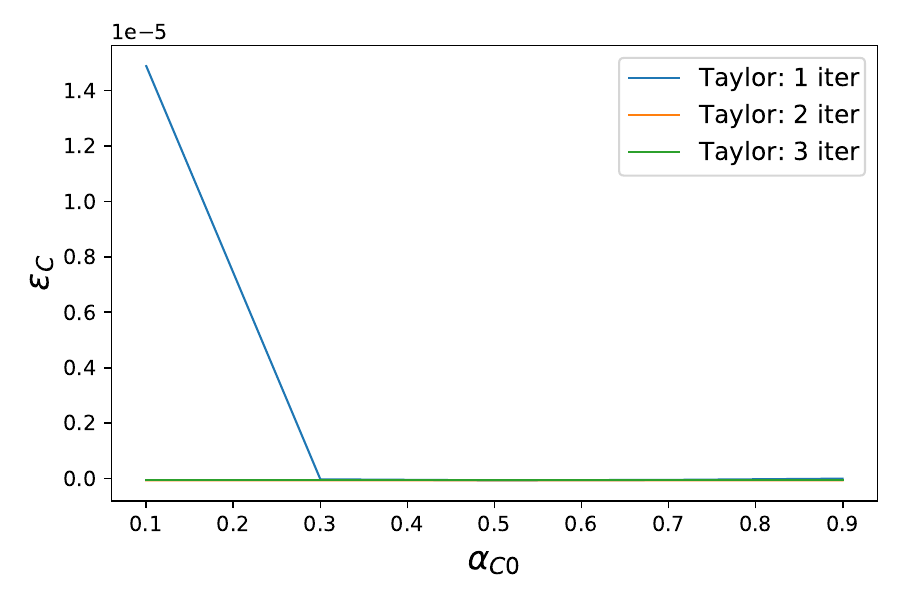}
	%\caption{Transport asymmetry (1 favorite County per State) \label{fig:giniaSym}}
	\end{subfigure}
	\caption{The performance of Taylor-Iter-BR on an HECG with 1 State and 10 Counties.}
	\label{fig:1_10}
\end{figure}

We observe that (1) the closer the actual point of expansion is to $(\alpha^*_{\C_0},\alpha^*_{\C_0},\dots,\alpha^*_{\C_0})$, the lower the $\epsilon$ (suggesting better performance). (2) 2 iterations of the Taylor-Iter-BR can achieve good performance regardless of the points of expansion. Although it is a heuristic algorithm, this phenomenon can be witnessed in other settings. (3) The $\epsilon$ of the counties are negative (2 and 3 iterations), which means that the algorithm finds a better approximation to the equilibrium than BRD search.

\begin{figure}[ht!]
	\centering
	\begin{subfigure}[t]{0.49\linewidth}
	\includegraphics[width=0.99\linewidth]{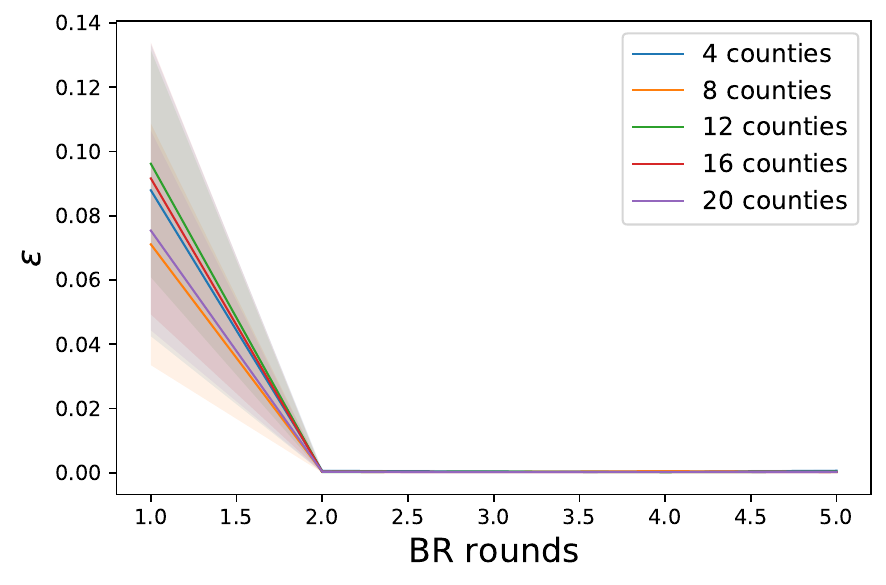}
	\end{subfigure}~
    \begin{subfigure}[t]{0.49\linewidth}
	\includegraphics[width=0.99\linewidth]{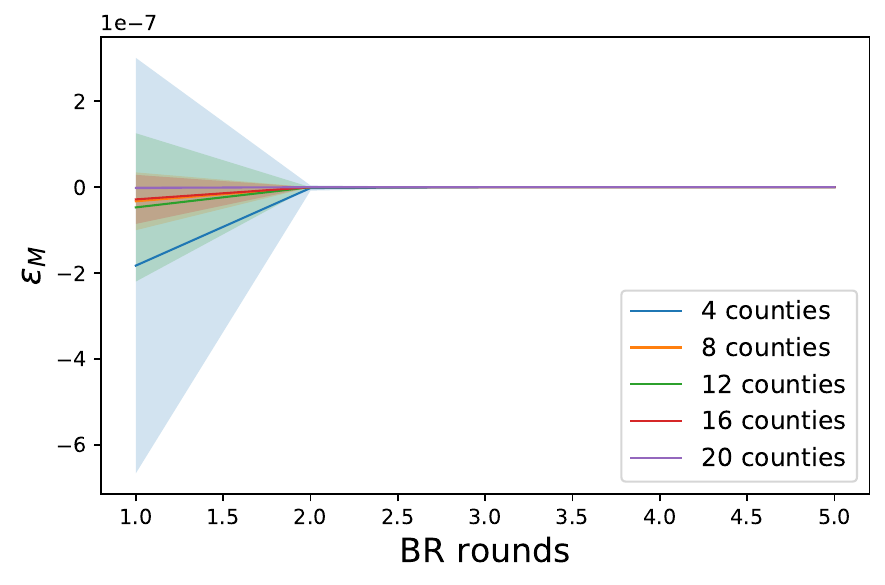}
    \end{subfigure}
	\caption{The performance of Taylor-Iter-BR of 2 State and $\C$ counties for varying $\C$.}
	\label{fig:taylor-iter-br}
\end{figure}

Then we vary the number of counties from 5 to 35 to compare the performance and run-time. We approximate the infection at point $0.5$ and conduct two iterations of the Taylor-Iter-BR. The initial infection rate of the first $N/2$ Counties is set to 0.7 while others are set to 0.2. We compare the Taylor-Iter-BR and BRD methods with different parameters. As shown in Figure \ref{fig:QIP}, the run-time of Taylor-Iter-BR (2 iterations) is more efficient than the binary searched BRD with discretization factor 0.01 and 0.05 but performs better than BRD (0.05) and as well as BRD (0.01). BRD (0.1) is efficient but selects the strategies with the highest $\epsilon$. All the Taylor-Iter-BR (1,2,3 iterations) perform better than BRD (0.01) since they have negative $\epsilon$.

After testing the 1 State N Counties cases, we also demonstrate the performance of the state's best response dynamics using Taylor-Iter-BR (2 iterations) under 2 States N Counties cases. $N$ is from 4 to 20. Each state has for $N/2$ Counties under it. The Counties of State 1 have their initial infection rates sampled from $\mathbb{U}[0.5,1]$ whereas those of the others are sampled from $\mathbb{U}[0,0.5]$. BRD finds the pure strategy NE in $2$ best response rounds.

\end{document}

